\newcommand{\namedref}[2]{\hyperref[#2]{#1~\ref*{#2}}}
\newcommand{\Supp}{{\sf Supp}}
\newcommand{\Span}{{\sf Span}}
\newcommand{\F}{\mathbb{F}}
\newcommand{\wt}{{\sf wt}}
\DeclareMathOperator{\Rk}{Rank}
\DeclareMathOperator{\loc}{Loc}
\theoremstyle{plain}
\newtheorem{theorem}{Theorem}
\newtheorem{lemma}[theorem]{Lemma}
\newtheorem{corollary}[theorem]{Corollary}
\newtheorem{definition}[theorem]{Definition}
\newtheorem{fact}[theorem]{Fact}
\theoremstyle{definition}
\theoremstyle{plain}
\newcommand{\mb}[1]{\ensuremath{\mathbf{#1}}}
\newcommand{\mc}[1]{\ensuremath{\mathcal{#1}}}
\newcommand{\eat}[1]{}
\renewcommand{\geq}{\geqslant}
\renewcommand{\leq}{\leqslant}
\renewcommand{\epsilon}{\varepsilon}
\def\bull{\vrule height .9ex width .8ex depth -.1ex }
\title{On the Locality of Codeword Symbols}
\author{Parikshit Gopalan \\ Microsoft Research \\ parik@microsoft.com
\and Cheng Huang \\ Microsoft Research \\ chengh@microsoft.com
\and Huseyin Simitci \\ Microsoft Corporation \\ huseyins@microsoft.com
\and Sergey Yekhanin \\ Microsoft Research \\ yekhanin@microsoft.com}
\begin{document}
\date{}
\maketitle

\begin{abstract}
Consider a linear $[n,k,d]_q$ code $\mc{C}.$ We say that that $i$-th
coordinate of $\mc{C}$ has locality $r,$ if the value at this
coordinate can be recovered from accessing some other $r$ coordinates
of $\mc{C}.$ Data storage applications require codes with small
redundancy, low locality for information coordinates, large distance,
and low locality for parity coordinates. In this paper we carry out an
in-depth study of the relations between these parameters.

We establish a tight bound for the redundancy $n-k$ in terms of the
message length, the distance, and the locality of information
coordinates. We refer to codes attaining the bound as optimal. We
prove some structure theorems about optimal codes, which are
particularly strong for small distances. This gives a fairly complete
picture of the tradeoffs between codewords length, worst-case distance
and locality of information symbols.

We then consider the locality of parity check symbols and erasure
correction beyond worst case distance for optimal codes.
Using our structure theorem, we obtain a tight bound for the
locality of parity symbols possible in such codes for a broad class of
parameter settings. We prove that there is a tradeoff between having good
locality for parity checks and the ability to correct erasures beyond
the minimum distance.

\end{abstract}

\section{Introduction}\label{Sec:Introduction}

Modern large scale distributed storage systems such as data centers store data in a redundant form to ensure reliability against node (e.g., individual machine) failures. The simplest solution here is the straightforward replication of data packets across different nodes. Alternative solution involves erasure coding: the data is partitioned into $k$ information packets. Subsequently, using an erasure code, $n-k$ parity packets are generated and all $n$ packets are stored in different nodes.

Using erasures codes instead of replication may lead to dramatic improvements both in terms of redundancy and reliability. However to realize these improvements one has to address the challenge of maintaining an erasure encoded representation. In particular, when a node storing some packet fails, one has to be able to quickly reconstruct the lost packet in order to keep the data readily available for the users and to maintain the same level of redundancy in the system. We say that a certain packet has {\it locality} $r$ if it can be recovered from accessing only $r$ other packets. One way to ensure fast reconstruction is to use erasure codes where all packets have low locality $r\ll k.$ Having small value of locality is particularly important for information packets.

These considerations lead us to introduce the concept of an
$(r,d)$-{\it code}, i.e., a linear code of distance~$d,$ where all
information symbols have locality at most $r.$ Storage system based on
$(r,d)$-codes provide fast recovery of information packets from a
single node failure (typical scenario), and ensure that no data is
lost even if up to $d-1$ nodes fail simultaneously. One specific class
of $(r,d)$-codes called Pyramid Codes has been considered in~\cite{HCL}.

Pyramid codes can be obtained from any systematic Maxmimum Distance Seperable (MDS) codes of distance $d,$ such as Reed Solomon codes. Assume for simplicity that the first parity check symbol is the sum $\sum_{i=1}^kx_i$ of the information symbols. Replace this with $\left\lceil\frac{k}{r}\right\rceil$ parity checks each of size at most $r$ on disjoint information symbols. It is not hard to see that the resulting code $\mc{C}$ has information locality $r$ and distance $d$, while  the redundancy of the code $\mc{C}$ is given by
\begin{equation}
\label{Eqn:PyramidRedundancy}
n-k = \left\lceil \frac{k}{r} \right\rceil +  d-2.
\end{equation}

\subsection{Our results}

In this paper we carry out an in-depth study of the relations between redundancy, erasure-correction and symbol locality in linear codes.

Our first result is a tight bound for the redundancy in terms of the
message length, the distance, and the information locality. We show
that in any $[n,k,d]_q$ code of information locality $r,$
\begin{equation}
\label{Eqn:IntoMain}
n-k \geq \left\lceil \frac{k}{r} \right\rceil +  d-2.
\end{equation}
We refer to codes attaining the bound above as optimal. Pyramid codes
are one such family of codes. The bound~(\ref{Eqn:IntoMain}) is of
particular interest in the case when $r\mid k,$ since otherwise one
can improve the code by increasing the dimension while maintaining the
$(r,d)$-property and redundancy intact. A closer examination of our
lower bound gives a structure theorem for optimal codes when
$r\mid k$. This theorem is especially strong when $d < r+3$, it fixes
the support of the parity check matrix, the only freedom is in the
choice of coefficients. We also show that the condition $r < d+3$ is
in fact necessary for such a strong statement to hold.

We then turn our attention to the locality of parity symbols.
We prove tight bounds on the locality of parity symbols in optimal codes assuming
$d < r+3.$ In particular we establish the existence of optimal $(r,d)$-codes
that are significantly better than pyramid codes with respect to
locality of parity symbols. Our codes are explicit in the case of
$d=4,$ and non-explicit otherwise. The lower bound is proved using the
structure theorem. Finally, we relax the conditions
$d<r+3$ and $r\mid k$ and exhibit one specific family of optimal codes
that gives locality $r$ for all symbols.

Our last result concerns erasure correction beyond the worst case distance of the code. Assume that we are given a bipartite graph which describes the supports of the parity check symbols. What choice of coefficients will maximize the set of erasure patterns that can be corrected by such a code? In~\cite{HCL} the authors gave a necessary condition for an erasure pattern to be correctable, and showed that over sufficiently large fields, this condition is also sufficient. They called such codes Generalized Pyramid codes. We show that such codes cannot have any non-trivial parity locality; thus establishing a tradeoff between parity locality and erasure correction beyond the worst case distance.

\subsection{Related work}

There are two classes of erasure codes providing fast recovery procedures for individual codeword coordinates (packets) in the literature.

{\it Regenerating codes.} These codes were introduced in~\cite{Dimakis_1} and developed further in e.g.,~\cite{RSK,CJM}. See~\cite{Dimakis_survey} for a survey. One crucial idea behind regenerating codes is that of {\it sub-packetization}. Each packet is composed of few sub-packets, and when a node storing a packet fails all (or most of) other nodes send in some of their sub-packets for recovery. Efficiency of the recovery procedure is measured in terms of the overall bandwidth consumption, i.e., the total size of sub-packets required to recover from a single failure. Somehow surprisingly regenerating codes can in many cases achieve a rather significant reduction in bandwidth, compared with codes that do not employ sub-packetization. Our experience with data centers however suggests that in practice there is a considerable overhead related to accessing extra storage nodes. Therefore
pure bandwidth consumption is not necessarily the right single measure of the recovery time. In particular, coding solutions that do not rely on sub-packetization and thus access less nodes (but download more data) are sometimes more attractive.

\smallskip

{\it Locally decodable codes.} These codes were introduced in~\cite{KT} and developed further in e.g.,~\cite{Y_nice,Efremenko,KSY}. See~\cite{Y_now} for a survey. An $r$-query Locally Decodable Code (LDC) encodes messages in such a way that one can recover any message symbol by accessing only $r$ codeword symbols even after some arbitrarily chosen (say) $10\%$ of codeword coordinates are erased. Thus LDCs are in fact very similar to $(r,d)$-codes addressed in the current paper, with an important distinction that LDCs allow for local recovery even after a very large number of symbols is erased, while $(r,d)$-codes provide locality only after a single erasure. Not surprisingly locally decodable codes require substantially larger codeword lengths then $(r,d)$-codes.

\subsection{Organization}

In section~\ref{Sec:LBoundAndStructure} we establish the lower bound for redundancy of $(r,d)$-codes and obtain a structural characterization of optimal codes, i.e., codes attaining the bound. In section~\ref{Sec:CanonicalCodes} we strengthen the structural characterization for optimal codes with $d<r+3$ and show that any such code has to be a canonical code. In section~\ref{Sec:ParityLocality} we prove matching lower and upper bounds on the locality of parity symbols in canonical codes. Our code construction is not explicit and requires the underlying field to be fairly large. In the special case of codes of distance $d=4,$ we come up with an explicit family that does not need a large field. In section~\ref{Sec:NonCanonicalCodes} we present one optimal family of non-canonical codes that gives uniform locality for all codeword symbols. Finally, in section~\ref{Sec:BeyondDist} we study erasure correction beyond the worst case distance and prove that systematic codes correcting the maximal number of erasure patterns (conditioned on the support structure of the generator matrix) cannot have any non-trivial locality for parity symbols.

\section{Preliminaries}

We use standard mathematical notation
\begin{itemize}

\item For an integer $t,$ $[t]=\{1,\ldots,t\};$

\item For a vector ${\bf x},$ $\Supp({\bf x})$ denotes the set $\{i: {\bf x}_i \neq 0\};$

\item For a vector ${\bf x},$ $\wt({\bf x}) = |\Supp({\bf x})|$ denotes the Hamming weight;

\item For a vector ${\bf x}$ and an integer $i,$ ${\bf x}(i)$ denotes the $i$-th coordinate of ${\bf x};$

\item For sets $A$ and $B,$ $A\sqcup B$ denotes the disjoint union.
\end{itemize}

Let $\mc{C}$ be an $[n,k,d]_q$ linear code. Assume that the encoding of ${\bf x}\in \mathbb{F}_q^k$ is by the vector
\begin{equation}
\label{Eqn:CodeDefinition}
\mc{C}({\bf x}) = ({\bf c}_1\cdot {\bf x}, {\bf c}_2\cdot {\bf x},\ldots, {\bf c}_n\cdot {\bf x}) \in \F_q^n.
\end{equation}
Thus the code $\mc{C}$ is specified by the set of $n$ points $C = \{{\bf c}_1,\ldots,{\bf c}_n\} \in \F_q^k$. The set of points must have full rank for $\mc{C}$ to have $k$ information symbols. It is well known that the distance property is captured by the following condition
(e.g.,~\cite[theorem 1.1.6]{TVN_book}).

\begin{fact}
\label{Fact:Dist}
The code $\mc{C}$ has distance $d$ if and only if for every $S \subseteq C$ such that $\Rk(S) \leq k-1,$
\begin{equation}
\label{Eqn:SnD}
|S| \leq n-d.
\end{equation}
\end{fact}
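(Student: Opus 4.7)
The plan is to translate the distance condition into a statement about how the point set $C = \{{\bf c}_1,\ldots,{\bf c}_n\}$ interacts with hyperplanes of $\F_q^k$. First I would observe that for any nonzero ${\bf x} \in \F_q^k$, the codeword $\mc{C}({\bf x})$ vanishes at coordinate $i$ precisely when ${\bf c}_i \cdot {\bf x} = 0$, i.e., when ${\bf c}_i$ lies in the hyperplane ${\bf x}^{\perp}$. Setting $S_{\bf x} = \{{\bf c}_i \in C : {\bf c}_i \cdot {\bf x} = 0\}$, this gives $\wt(\mc{C}({\bf x})) = n - |S_{\bf x}|$, so the minimum distance of $\mc{C}$ equals $n - \max_{{\bf x} \neq 0}|S_{\bf x}|$.

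The second step is to characterize, in terms of rank, which subsets $S \subseteq C$ arise this way. Every $S_{\bf x}$ is contained in the hyperplane ${\bf x}^{\perp}$ and hence has rank at most $k-1$. Conversely, if $S \subseteq C$ satisfies $\Rk(S) \leq k-1$, then $\Sp(S)$ is contained in ${\bf x}^{\perp}$ for some nonzero ${\bf x}$ (any ${\bf x}$ annihilating $\Sp(S)$ will do), and then $S \subseteq S_{\bf x}$. Taking maxima on both sides yields
$$\max\{|S| : S \subseteq C,\ \Rk(S) \leq k-1\} \;=\; \max_{{\bf x} \neq 0} |S_{\bf x}|.$$

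Combining the two steps, the minimum distance of $\mc{C}$ is exactly $n - \max\{|S| : S \subseteq C,\ \Rk(S) \leq k-1\}$, so the inequality $|S| \leq n-d$ for every admissible $S$ is merely a restatement of the code having distance at least $d$, and the biconditional follows. There is no real obstacle here; the only technical point is the rank-to-hyperplane correspondence, which is immediate from the standard duality between subspaces of $\F_q^k$ and their annihilators.
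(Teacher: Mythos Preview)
Your argument is correct and is the standard one: identify the zero coordinates of $\mc{C}({\bf x})$ with the points of $C$ lying in the hyperplane ${\bf x}^\perp$, and observe that subsets of $C$ of rank at most $k-1$ are exactly those contained in some hyperplane. The paper itself does not give a proof of this fact; it is stated as well known with a reference to~\cite[theorem~1.1.6]{TVN_book}, so there is nothing to compare against beyond noting that your reasoning matches the intended interpretation (as the paper paraphrases immediately after the statement: ``every hyperplane through the origin misses at least $d$ points from $C$'').
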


In other words, every hyperplane through the origin misses at least $d$ points from $C$. In this work, we are interested in the recovery cost of each symbol in the code from a single erasure.

\begin{definition}\label{Def:Locality}
For ${\bf c}_i \in C$, we define $\loc({\bf c}_i)$ to be the smallest integer $r$ for which there exists $R \subseteq C$ of cardinality $r$ such that
$${\bf c}_i = \sum_{j \in R} \lambda_j{\bf c}_j.$$
We further define $\loc(\mc{C}) = \max_{i \in [n]}\loc({\bf c}_i)$.
\end{definition}

Note that $\loc({\bf c}_i) \leq k$, provided $d \geq 2$, since this guarantees that $C \setminus \{{\bf c}_i\}$ has full dimension. We will be interested in (systematic) codes which guarantee locality for the information symbols.

\begin{definition}
We say that a code $\mc{C}$ has information locality $r$ if there exists $I \subseteq C$ of full rank such that $\loc({\bf c}) \leq r$ for all ${\bf c} \in I.$
\end{definition}

For such a code we can choose $I$ as our basis for $\F_q^k$ and partition $C$ into $I = \{{\bf e}_1,\ldots,{\bf e}_k\}$ corresponding to information symbols and $C\setminus I = \{{\bf c}_{k+1},\ldots,{\bf c}_n\}$ corresponding to parity check symbols. Thus the code $\mc{C}$ can be made systematic. \begin{definition}
\label{Def:RDCode}
A code $\mc{C}$ is an $(r,d)$-code if it has information locality $r$ and distance $d.$
\end{definition}

For any code $\mc{C}$, the set of all linear dependencies of length at most $r+1$ on points in $C$ defines a natural hypergraph $H_r(V,E)$ whose vertex set $V = [n]$ is in one-to-one correspondence to points in $C$. There is an edge corresponding to set $S \subseteq V$ if $|S| \leq r +1$
$$\sum_{i \in S}\lambda_i{\bf c}_i = 0, \ \ \lambda_i \neq 0.$$
Equivalently $S \subseteq [n]$ is an edge in $H$ if it supports a codeword in $\mc{C}^\perp$ of weight at most $r+1$. Since $r$ will usually be clear from the context, we will just say $H(V,E)$. A code $\mc{C}$ has locality $r$ if there are no isolated vertices in $H$. A code $\mc{C}$ has information locality $r$ if the set points corresponding to vertices that are incident to some edge in $H$ has full rank.

We conclude this section presenting one specific class of $(r,d)$-codes has been considered in~\cite{HCL}:

{\it Pyramid codes.} In what follows the dot product of vectors ${\bf p}$ and ${\bf x}$ is denoted by ${\bf p}\cdot {\bf x}.$ To define an $(r,d)$ pyramid code $\mc{C}$ encoding messages of dimension $k$ we fix an arbitrary linear systematic $[k+d-1,k,d]_q$ code $\mc{E}.$ Clearly, $\mc{E}$ is MDS. Let
$$\mc{E}({\bf x}) = ({\bf x},{\bf p}_0\cdot {\bf x},{\bf p}_1\cdot {\bf x},\ldots,{\bf p}_{d-2}\cdot {\bf x}).$$
We partition the set $[k]$ into $t=\left\lceil\frac{k}{r}\right\rceil$ subsets of size up to $r,$ $[k]=\bigsqcup_{i\in [t]} S_i.$ For a $k$-dimensional vector ${\bf x}$ and a set $S\subseteq [k]$ let ${\bf x}|_{S}$ denote the $|S|$-dimensional restriction of ${\bf x}$ to coordinates in the set $S.$ We define the systematic code $\mc{C}$ by
$$
\mc{C}({\bf x}) = \left({\bf x},\left({\bf p}_0|_{S_1}\cdot {\bf x}|_{S_1}\right),\ldots,\left({\bf p}_0|_{S_t}\cdot {\bf x}|_{S_t}\right),{\bf p}_1\cdot {\bf x},\ldots,{\bf p}_{d-2}\cdot {\bf x}\right).
$$
It is not hard to see that the code $\mc{C}$ has distance $d.$ To see
that all information symbols and the first
$\left\lceil\frac{k}{r}\right\rceil$ parity symbols of $\mc{C}$ have
locality $r$ one needs to observe that (since $\mc{E}$ is an MDS code)
the vector ${\bf p}_0$ has full Hamming weight. The last $d-2$ parity
symbols of $\mc{C}$ may have locality as large as $k.$

\section{Lower Bound and the Structure Theorem}\label{Sec:LBoundAndStructure}

We are interested in systematic codes with information locality $r$. Given $k,r,d$ our goal is to minimize the codeword length $n$. Since the code is systematic, this amounts to minimizing the redundancy $h = n -k$. Pyramid codes have $h = \left\lceil\frac{k}{r}\right\rceil + d -2.$
Our goal is to prove a matching lower bound. Lower bounds of $k/r$ and $d-1$ are easy to show, just from the locality and distance constraints respectively. The hard part is to sum them up.
\begin{theorem}
\label{Theorem:LowerBound}
For any $[n,k,d]_q$ linear code with information locality $r$,
\begin{equation}
\label{Eqn:LowerBound}
n-k \geq \left\lceil \frac{k}{r} \right\rceil +  d-2.
\end{equation}
\end{theorem}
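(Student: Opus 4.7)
The plan is to exhibit a subset $S \subseteq C$ satisfying $\Rk(S) \leq k - 1$ and $|S| \geq k + \lceil k/r \rceil - 2$. Once such an $S$ is in hand, \factref{Fact:Dist} forces $|S| \leq n - d$, and rearranging yields the claimed bound.

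I will build $S$ incrementally by a greedy procedure that accumulates \emph{local groups}. By assumption, for every information symbol $\mathbf{e}_i$ in the distinguished full-rank set $I$ there is a set $\Gamma_i \subseteq C$ of size at most $r+1$ containing $\mathbf{e}_i$ on which some nonzero dual codeword $\mathbf{w}_i \in \mc{C}^\perp$ is supported; these are the groups I will pile up. Starting from $S_0 = \emptyset$, at each step $\ell$ I pick any $\mathbf{e}_{j_\ell} \in I$ that is not yet spanned by $S_{\ell-1}$ (which exists as long as $\Rk(S_{\ell-1}) < k$, since $I$ spans $\F_q^k$) and adjoin $\Gamma_{j_\ell}$ to $S_{\ell-1}$. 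The only subtlety is that adjoining a full group can raise the rank by up to $r$, so the rank may jump past $k - 1$. To avoid this, on the step that would overshoot I instead adjoin only a suitably chosen subset of $\Gamma_{j_\ell}$ just large enough to bring the rank to exactly $k - 1$, and terminate.

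Two facts then need to be verified about the resulting $S$. First, if $F$ denotes the number of \emph{full} steps (those that added a complete $\Gamma$), the dual codewords $\mathbf{w}_{j_\ell}$ collected over those $F$ steps are linearly independent: were there a nontrivial dependence among them, taking the largest-index nonzero coefficient $\ell^\dagger$ would force $\Supp(\mathbf{w}_{j_{\ell^\dagger}}) \subseteq S_{\ell^\dagger - 1}$, hence $\mathbf{e}_{j_{\ell^\dagger}} \in S_{\ell^\dagger - 1}$, contradicting the selection rule. This yields $|S| - \Rk(S) \geq F$. Second, $F \geq \lceil k/r \rceil - 1$: each full step raises the rank by at most $r$ (the dependence $\mathbf{w}_{j_\ell}$ forces $\Rk(\Gamma_{j_\ell}) \leq r$) and the at-most-one partial step raises it by at most $r - 1$, so a short ceiling calculation on the total rank gain $k - 1$ pins $F$ down from below in both the ends-on-full and ends-on-partial regimes.

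Combining the two ingredients gives $|S| \geq (k - 1) + (\lceil k/r \rceil - 1) = k + \lceil k/r \rceil - 2$, finishing the argument. The main obstacle is precisely the overshoot handling: a single full addition can push the rank from as low as $k - r$ straight to $k$, bypassing the target $k - 1$, and the partial-step fallback stops the procedure at the right level only at the cost of not collecting a fresh dependence on the terminal step. One therefore has to check that the remaining $F$ dependences still suffice, and this is exactly what the ceiling arithmetic in the second claim secures.
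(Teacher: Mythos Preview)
Your proposal is correct and follows essentially the same approach as the paper: greedily accumulate local groups until the rank hits $k-1$, handle the possible overshoot with a truncated final step, and then invoke \factref{Fact:Dist}. The only difference is bookkeeping: the paper tracks the per-step size and rank increments $s_i,t_i$ and uses $t_i\le s_i-1$ directly, while you instead argue globally that the $F$ collected dual words are linearly independent (via the ``largest-index'' trick) to get $|S|-\Rk(S)\ge F$; these two accountings are equivalent.
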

\begin{proof}
Our lower bound proceeds by constructing a large set $S \subseteq C$ where $\Rk(S) \leq k-1$ and then applying Fact~\ref{Fact:Dist}. The set $S$ is constructed by the following algorithm:
\begin{center}
\fbox{%
\parbox{15cm}{%
1.\ \ Let $i =1, S_0 = \{\}$.

2.\ \ While $\Rk(S_{i-1}) \leq k-2$:

3.\ \ \hspace{1cm} Pick ${\bf c}_i \in C \setminus S_{i-1}$ such that there is a hyperedge $T_i$ in $H$ containing ${\bf c}_i.$

4.\ \ \hspace{1cm} If $\Rk(S_{i-1} \cup T_i) < k$, set $S_i = S_{i-1} \cup T_i$.

5.\ \ \hspace{1.4cm} Else pick $T^\prime \subset T_i$ so that $\Rk(S_{i-1} \cup T^\prime) = k-1$ and set $S_i = S_{i-1} \cup T^\prime$.

6.\ \ \hspace{1cm} Increment $i.$
}%
}
\end{center}
In Line 3, since $\Rk(S_{i-1}) \leq k-2$ and $\Rk(I) = k,$ there exists ${\bf c}_i$ as desired. Let $\ell$ denote the number of times the set $S_i$ is grown. Observe that the final set $S_\ell$ has $\Rk(S_\ell) = k-1$. We now lower bound $|S|$. We define $s_i, t_i$ to measure the increase in the
size and rank of $S_i$ respectively:
\begin{align*}
s_i = |S_i| - |S_{i-1}|, &  \ \ |S_\ell| = \sum_{i=1}^\ell s_i, \\
t_i = \Rk(S_i) - \Rk(S_{i-1}), & \  \ \Rk(S_\ell) = \sum_{i=1}^\ell t_i = k-1.
\end{align*}
We analyze two cases, depending on whether the condition $\Rk(S_{i-1}\cup T_i) = k$ is ever reached. Observe that this condition can only be reached when $i = \ell.$

{\bf Case 1:} Assume $\Rk(S_{i-1} \cup T_i) \leq k-1$ throughout. In each step we add $s_i \leq r+1$ vectors. Note that these vectors are always such that some nontrivial linear combination of them yields a (possibly zero) vector in $\Span(S_{i-1}).$ Therefore we have $t_i \leq s_i-1 \leq r.$ So there are $\ell \geq \left\lceil\frac{k-1}{r}\right\rceil$ steps in all. Thus
\begin{equation}
\label{Eqn:Case1}
|S| =\sum_{i=1}^\ell s_i \geq \sum_{i=1}^\ell (t_i  +1) \geq k-1 + \left\lceil \frac{k-1}{r}
\right\rceil
\end{equation}
Note that $k-1 + \left\lceil \frac{k-1}{r} \right\rceil \geq k + \left\lceil \frac{k}{r} \right\rceil -2$ with equality holding whenever $r=1$ or $k \equiv 1 \bmod r.$

{\bf Case 2:} In the last step, we hit the condition $\Rk(S_{\ell -1} \cup T_\ell) = k$. Since the rank only increases by $r$ per step, $\ell \geq \left\lceil \frac{k}{r} \right\rceil$. For $i \leq \ell -1$, we add a set $T_i$ of $s_i \leq r+1$ vectors. Again note that these vectors are always such that some nontrivial linear combination of them yields a (possibly zero) vector in $\Span(S_{i-1}).$ Therefore $\Rk(S_i)$ grows by $t_i$ where  $t_i \leq s_i -1.$ In Step $\ell$, we add $T^\prime \subset T_\ell$ to $S.$ This increases $\Rk(S)$ by $t_\ell \geq 1$ (since $\Rk(S) \leq k-2$ at the start) and $|S|$ by $s_\ell \geq t_\ell$. Thus
\begin{equation}
\label{Eqn:Case2}
|S| = \sum_{i=1}^\ell s_i \geq \sum_{i=1}^{\ell-1} (t_i +1) + t_\ell = k + \left\lceil \frac{k}{r} \right\rceil -2.
\end{equation}
The conclusion now follows from Fact~\ref{Fact:Dist} which implies that  $|S| \leq n-d$.
\end{proof}

\begin{definition}
We say that an $(r,d)$-code $\mc{C}$ is optimal if its parameters satisfy~(\ref{Eqn:LowerBound}) with equality.
\end{definition}

Pyramid codes~\cite{HCL} yield optimal $(r,d)$-codes for all values of $r,d,$ and $k$ when the alphabet $q$ is sufficiently large.

The proof of theorem~\ref{Theorem:LowerBound} reveals information about the structure of optimal $(r,d)$-codes. We think of the algorithm as attempting to maximize $$\frac{|S|}{\Rk(S)} = \frac{\sum_{i=1}^\ell s_i}{\sum_{i=1}^\ell t_i}.$$
With this in mind, at step $i$ we can choose ${\bf c}_i$ such that $\frac{s_i}{t_i}$ is maximized. An optimal length code should yield the same value for $|S|$ for this (or any) choice of ${\bf c}_i.$ This observation yields an insight into the structure of local dependencies in optimal codes, as given by the following structure theorem.

\begin{theorem}
\label{Theorem:Structure}
Let $\mc{C}$ be an $[n,k,d]_q$ code with information locality $r.$ Suppose $r\mid k,$ $r<k,$ and
\begin{equation}
\label{Eqn:StructureN}
n = k + \frac{k}{r} + d-2;
\end{equation}
then hyperedges in the hypergraph $H(V,E)$ are disjoint and each has size exactly $r+1.$
\end{theorem}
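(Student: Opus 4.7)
The plan is to combine the algorithmic lower bound from the proof of \thmref{Theorem:LowerBound} with \factref{Fact:Dist} to extract rigid constraints that every run of that algorithm must satisfy on an optimal code, and then contradict the rigidity by running the algorithm adversarially whenever a ``bad'' hyperedge configuration exists.

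First I would pin down the rigid structure forced by optimality. Since $r \mid k$, one checks $\lceil (k-1)/r \rceil = k/r$, so Case~1 of the lower-bound proof yields $|S_\ell| \geq k - 1 + k/r > n - d$, which \factref{Fact:Dist} forbids; hence every run of the algorithm on an optimal code must lie in Case~2. Tracing the Case~2 bound $|S_\ell| \geq k + \ell - 2 + (s_\ell - t_\ell)$ together with $\ell \geq k/r$, $s_\ell \geq t_\ell$, $t_i \leq r$ for $i < \ell$, $t_\ell \leq r-1$, and $\sum_i t_i = k - 1$, equality with $n-d$ forces $\ell = k/r$, $s_i = r+1$ and $t_i = r$ for every $i < \ell$, and $s_\ell = t_\ell = r-1$. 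These constraints must hold regardless of any choices made by the algorithm.

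Second, to see that every hyperedge has size exactly $r+1$, I would take any hyperedge $T^*$ with $|T^*| \leq r$ and set $T_1 = T^*$ in step~1. Then $s_1 = |T^*| \leq r$ contradicts the required $s_1 = r + 1$; this is legitimate because $\ell = k/r \geq 2$, since $r \mid k$ and $r < k$.

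Third, suppose distinct hyperedges $T_a, T_b$ share a vertex; by the previous step both have size $r + 1$. I would set $T_1 = T_a$, pick ${\bf c}_2 \in T_b \setminus T_a$ (nonempty since $T_a \neq T_b$), and take $T_2 = T_b$. Since $T_a \cap T_b \neq \emptyset$, $s_2 = |T_b \setminus T_a| \leq r$, which contradicts $s_2 = r+1$ whenever $\ell > 2$. The only obstacle is the boundary case $\ell = 2$, i.e., $k = 2r$, where the rigid constraint on the last step differs from that on intermediate steps; I would handle it separately by noting that the two independent dual codewords supported on $T_a$ and $T_b$ force $\Rk(T_a \cup T_b) \leq |T_a \cup T_b| - 2 < k$, so the Case~2 trigger at step~2 fails, and the algorithm either terminates with $|S_2| = |T_a \cup T_b| > n - d$ (when $|T_a \cap T_b| = 1$) or is driven into a third step with $\ell \geq 3$ (when $|T_a \cap T_b| \geq 2$), again giving $|S| > n - d$; either branch contradicts \factref{Fact:Dist}, so the hyperedges are disjoint.
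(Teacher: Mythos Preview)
Your approach is essentially the paper's: run the lower-bound algorithm adversarially and use optimality to force rigid equalities on $\ell$, the $s_i$, and the $t_i$. The paper phrases the intersecting-edges case via $t_1 + t_2 \le 2r-1$ rather than your $s_2 \le r$, but the mechanics are the same.

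There is one genuine gap. Your opening move ``since $r\mid k$, one checks $\lceil (k-1)/r\rceil = k/r$, so Case~1 gives $|S_\ell| \ge k-1+k/r > n-d$'' fails when $r=1$: then $\lceil(k-1)/1\rceil = k-1$, the Case~1 bound becomes $2k-2 = n-d$, and Case~1 is \emph{not} excluded. Since the theorem only assumes $r<k$, the case $r=1$ is in scope. The paper handles it separately: when $r=1$ every $t_i\le 1$, so one is necessarily in Case~1, and tracking equality in~(\ref{Eqn:Case1}) forces $\ell=k-1$, all $s_i=2$, all $t_i=1$, which rules out size-$1$ or intersecting edges directly.

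Two smaller points. First, the constraint ``$t_\ell \le r-1$'' that you list among the givens for Case~2 is not immediate: a priori $t_\ell \le s_\ell \le |T'| \le r$. What actually pins down $t_\ell = r-1$ is the observation that if the unique index $j$ with $t_j=r-1$ were strictly less than $\ell$, then $\Rk(S_{\ell-1}) = k-1-r$ and adding $T_\ell$ (rank $\le r$) could not reach rank $k$, contradicting the Case~2 trigger. Second, your $k/r=2$ case split on $|T_a\cap T_b|$ is slightly off: even when $|T_a\cap T_b|=1$ the algorithm might not terminate at step~2 (if $\Rk(T_a\cup T_b) \le k-2$). The fix is easy---either termination is a Case~1 termination (already excluded for $r\ge 2$), or the algorithm continues to $\ell\ge 3 > k/r$---so your conclusion survives, but the dichotomy should be on $\Rk(S_2)$ rather than on $|T_a\cap T_b|$.
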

\begin{proof}
We execute the algorithm presented in the proof of theorem~\ref{Theorem:LowerBound} to obtain a set $S$ and sequences $\{s_i\}$ and $\{t_i\}.$ We consider the case of $r=1$ separately.  Since all $t_i\leq 1$ we fall into Case~1. Combining formulas~(\ref{Eqn:Case1}),~(\ref{Eqn:SnD}) and~(\ref{Eqn:StructureN}) we get
$$|S| =\sum_{i=1}^\ell s_i = \sum_{i=1}^\ell t_i  + \ell = 2k-2.$$
Combining this with $\sum_{i=1}^\ell t_i=k-1$ we conclude that $\ell=k-1,$ all $s_i$ equal $2,$ and all $t_i$ equal $1.$ The latter two conditions preclude the existence of hyperedges of size $1$ or intersecting edges in $H.$

We now proceed to the case of $r>1.$ When $r\mid k,$ the bound in equation~(\ref{Eqn:Case1}) is larger than that in equation~(\ref{Eqn:Case2}). Thus, we must be in Case~2. Combining formulas~(\ref{Eqn:Case2}),~(\ref{Eqn:SnD}) and~(\ref{Eqn:StructureN}) we get
$$|S| = \sum_{i=1}^\ell s_i = \sum_{i=1}^{\ell} t_i + \ell - 1 = k + \frac{k}{r} -2.$$
Observe that $\sum_{i=1}^\ell t_i=k-1$ and thus $\ell=\frac{k}{r}.$ Together with the constraint $t_i \leq r,$ this implies that $t_j =r-1$ for some $j \in [\ell]$ and $t_i =r$ for $i \neq j.$ We claim that in fact $j = \ell.$ Indeed, if $j < \ell,$ we would have $\sum_{i \leq \ell-1}t_i = k -r -1$ and
$t_\ell =r,$ hence we would be in Case~1.

Now assume that there is an edge $T$ with $|T| \leq r$. By adding this edge to $S$ at the first step, we would get $t_1 \leq r-1.$ Next assume that $T_1 \cap T_2$ is non-empty. Observe that this implies $\Rk(T_1 \cup T_2) < 2r.$ So if we add edges $T_1$ and $T_2$ to $S,$ we have $t_1 + t_2 \leq 2r-1.$ Clearly these conditions lead to contradiction if $\ell=\frac{k}{r} \geq 3.$ In fact, they also give a contradiction for $\frac{k}{r} = 2,$ since they put us in Case~1.
\end{proof}

\section{Canonical Codes}\label{Sec:CanonicalCodes}

The structure theorem implies then when $d$ is sufficiently small (which in our experience is the setting of interest in most data storage applications), optimal $(r,d)$-codes have rather rigid structure. We formalize this by defining the notion of a {\it canonical} code.

\begin{definition}
Let $\mc{C}$ be a systematic $[n,k,d]_q$ code with information locality $r$ where $r\mid k,$ $r<k,$ and $n = k+ \frac{k}{r} + d-2$. We say that $\mc{C}$ is canonical if the set $C$ can partitioned into three groups $C=I\cup C^\prime\cup C^{\prime\prime}$ such that:
\begin{enumerate}
\item Points $I = \{{\bf e}_1,\ldots,{\bf e}_k\}.$
\item Points $C^\prime = \{{\bf c}^\prime_1,\ldots,{\bf c}^\prime_{k/r}\}$ where $\wt({\bf c}^\prime_i) = r$. The supports of these vectors are disjoint sets which partition $[k].$
\item Points $C^{\prime\prime}= \{{\bf c}^{\prime\prime}_1,\ldots,{\bf c}^{\prime\prime}_{d-2}\}$ where $\wt({\bf c}^{\prime\prime}_i) = k.$
\end{enumerate}
\end{definition}

Clearly any canonical code is systematic and has information locality $r$. The distance property requires a suitable choice of vectors $\{{\bf c}^\prime\}$ and $\{{\bf c}^{\prime\prime}\}.$ Pyramid codes~\cite{HCL} are an example of canonical codes. We note that since $r<k,$ there is always a distinction between symbols $\{{\bf c}^\prime\}$ and $\{{\bf c}^{\prime\prime}\}.$

\begin{theorem}
\label{Theorem:Canonical}
Assume that $d < r+3,$ $r<k,$ and $r\mid k.$ Let  $n = k + \frac{k}{r} +  d-2$. Every systematic $[n,k,d]_q$ code with information locality $r$ is a canonical code.
\end{theorem}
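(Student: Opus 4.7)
The plan is to combine Theorem~\ref{Theorem:Structure} with a short counting argument to pin down the local dependency structure of $\mathcal{C}$, and then exploit Fact~\ref{Fact:Dist} to force the remaining parity symbols to have full Hamming weight. Identifying the information coordinates with the standard basis $I = \{\mathbf{e}_1,\ldots,\mathbf{e}_k\}$, I would first appeal to Theorem~\ref{Theorem:Structure} to conclude that every edge of the hypergraph $H(V,E)$ has size exactly $r+1$ and that distinct edges are vertex-disjoint.

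Next I would carve out the set $C'$ via a counting argument on the edges that meet $I$. Each $\mathbf{e}_i$ lies in some edge (by information locality), and no edge can be contained in $I$ since the $\mathbf{e}_i$ are linearly independent. Letting $m$ be the number of edges meeting $I$ and $a_i \leq r$ the number of information symbols in the $i$-th such edge, $\sum_i a_i = k$ forces $m \geq k/r$, while the fact that these edges consume $m(r+1) - k$ parity symbols (out of $k/r + d - 2$ available) forces $m \leq k/r + (d-2)/(r+1)$. The hypothesis $d \leq r+2$ makes the fractional term strictly less than $1$, so $m = k/r$ and each $a_i = r$. The $k/r$ parity symbols appearing in these edges are the desired $C'$: each is a nonzero combination of the $r$ information symbols in its edge (hence has weight exactly $r$), and edge-disjointness guarantees that these supports are disjoint and partition $[k]$.

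Set $C'' = C \setminus (I \cup C')$, so $|C''| = d-2$. Since $|C''| \leq r < r+1$ and all edges touching $I$ are already accounted for, no edge of $H$ can meet $C''$ at all. To show each $\mathbf{c}'' \in C''$ has weight $k$, I would apply Fact~\ref{Fact:Dist} to the coordinate hyperplane $\{\mathbf{x} : \mathbf{x}(j) = 0\}$ for each $j \in [k]$. This hyperplane misses at least $d$ points of $C$; it contains every $\mathbf{e}_i$ with $i \neq j$, so at least $d-1$ of the missed points must be parity symbols containing $j$ in their support. Exactly one of these comes from $C'$ (the unique member whose support contains $j$), so all $d-2$ remaining contributions must come from $C''$; since $|C''| = d-2$, every $\mathbf{c}'' \in C''$ must contain $j$ in its support, and as $j$ was arbitrary this yields $\wt(\mathbf{c}'') = k$.

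The main obstacle is the counting bookkeeping in the middle paragraph, where the hypothesis $d < r+3$ needs to be used in precisely the right spot to prevent the appearance of ``stray'' edges that would spoil the clean partition $I \sqcup C' \sqcup C''$. Once that edge structure is pinned down, the distance-based argument for full weight is almost mechanical; the delicate point is recognizing that $(d-2)/(r+1) < 1$ is exactly what lets the counting close up with $m = k/r$ edges of the right type.
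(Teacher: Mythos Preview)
Your proof is correct and follows essentially the same path as the paper's: invoke Theorem~\ref{Theorem:Structure}, count edges to show there are exactly $k/r$ of them each containing $r$ information symbols and one parity symbol (your bound $m \leq k/r + (d-2)/(r+1)$ is just a rearrangement of the paper's $n \geq m(r+1)$), and then use the distance condition to force the remaining $d-2$ parity symbols to have full weight. The paper phrases the last step as ``the codeword $\mc{C}(\mathbf{e}_j)$ has weight at least $d$,'' which is exactly your hyperplane argument via Fact~\ref{Fact:Dist} viewed dually.
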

\begin{proof}
Let $\mc{C}$ be a systematic $[n,k,d]$ code with information locality $r$. We start by showing that the hypergraph $H(V,E)$ has $\frac{k}{r}$ edges.

Since $\mc{C}$ is systematic, we know that $I = \{{\bf e}_1,\ldots,{\bf e}_k\} \subset C$. By theorem~\ref{Theorem:Structure}, $H(V,E)$ consists of
$m$ disjoint, $(r+1)$-regular edges and every vertex in $I$ appears in some edge. But since the points in $I$ are linearly independent, every edges involves at least one vertex from $C\setminus I$ and at most $r$ from $I$. So we have $m \geq \frac{k}{r}$. We show that equality holds.

Assume for contradiction that $m \geq \frac{k}{r} + 1$. Since the edges are regular and disjoint, we have
$$n \geq m(r+1) = k + \frac{k}{r} + r + 1 > k + \frac{k}{r} + d -2$$
which contradicts the choice of $n$. Thus $m = \frac{k}{r}.$ This means that every edge $T_i$ is incident on exactly $r$ vertices
${\bf e}_{i_1}\ldots,{\bf e}_{i_r}$ from $I$ and one vertex ${\bf c}^\prime_i$ outside
it. Hence
$${\bf c}^\prime_i  = \sum_{j=1}^r \lambda_{i_j}{\bf e}_{i_j}.$$
Since the $T_i$s are disjoint, the vectors ${\bf c}^\prime_1,\ldots,{\bf c}^\prime_{k/r}$ have disjoint supports which partition $[k]$.

We now show that the remaining vectors ${\bf c}^{\prime\prime}_1,\ldots,{\bf c}^{\prime\prime}_{d-2}$ must all have $\wt({\bf c}^{\prime\prime}) =k$. For this, we consider the encoding of ${\bf e}_j$. We note ${\bf e}_i\cdot {\bf e}_j \neq 0$ iff $i =j$ and ${\bf c}^\prime_i\cdot {\bf e}_j \neq 0$ iff $j \in \Supp({\bf c}_i)$. Thus only $2$ of these inner products
are non-zero. Since the code has distance $d$, all the $d-2$ inner products ${\bf c}^{\prime\prime}_i\cdot {\bf e}_j$ are non-zero. This shows that $\wt({\bf c}^{\prime\prime}_i) = k$ for
all $i$.
\end{proof}

The above bound is strong enough to separate having locality of $r$
from having just information locality of $r$. The following corollary
follows from the observation that the hypergraph $H(V,E)$ must contain
$n - \frac{k}{r}(r+1) = d-2$ isolated vertices, which do not
participate in any linear relations of size $r+1$.

\begin{corollary}
Assume that  $2< d < r+3$ and $r\mid k.$ Let $n = k + \frac{k}{r} +
d-2$. There are no $[n,k,d]_q$ linear codes with locality $r$.
\end{corollary}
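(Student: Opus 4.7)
My plan is to derive a contradiction by assuming such a code $\mc{C}$ exists and then exhibiting a codeword symbol whose locality exceeds $r$. Since locality $r$ implies information locality $r$, and the hypotheses $d<r+3$, $r\mid k$, and (implicitly) $r<k$ of \thmref{Theorem:Canonical} are in force, that theorem applies and forces $\mc{C}$ to be canonical. In particular, by \thmref{Theorem:Structure}, the hypergraph $H_r(V,E)$ of linear dependences of weight at most $r+1$ is a disjoint union of hyperedges of cardinality exactly $r+1$.

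Next I would count vertices and edges. Arguing as in the proof of \thmref{Theorem:Canonical}, the number of hyperedges is exactly $m=\frac{k}{r}$, which accounts for $m(r+1)=k+\frac{k}{r}$ vertices. Since $|V|=n=k+\frac{k}{r}+d-2$, this leaves exactly $d-2$ isolated vertices in $H_r$, and the hypothesis $d>2$ guarantees that at least one such isolated vertex exists.

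Finally, I would translate isolation in $H_r$ back into a locality statement. An isolated vertex $i$ corresponds to a symbol ${\bf c}_i$ admitting no relation of the form ${\bf c}_i=\sum_{j\in R}\lambda_j{\bf c}_j$ with $|R|\leq r$, since any such relation would produce an edge of $H_r$ incident to $i$. Hence $\loc({\bf c}_i)>r$, contradicting the assumed bound $\loc(\mc{C})\leq r$. The substantive work has already been carried out in \thmref{Theorem:Structure} and \thmref{Theorem:Canonical}; the only genuinely new step is the vertex-counting observation highlighted in the corollary's statement, and I do not expect it to present any real obstacle.
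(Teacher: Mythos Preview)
Your proposal is correct and follows essentially the same route as the paper: the paper's one-sentence argument is precisely that $H(V,E)$ must contain $n-\frac{k}{r}(r+1)=d-2>0$ isolated vertices, which is exactly the vertex-count you describe after invoking Theorems~\ref{Theorem:Structure} and~\ref{Theorem:Canonical}. Your observation that $r<k$ is an implicit hypothesis (inherited from Theorem~\ref{Theorem:Canonical}) is also correct.
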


\section{Canonical codes: parity locality}\label{Sec:ParityLocality}

Theorem~\ref{Theorem:Canonical} gives a very good understanding optimal $(r,d)$-codes in the case $r<d+3$ and $r\mid k.$ For any such code the coordinate set $C=\{{\bf c}_i\}_{i\in [n]}$ can be partitioned into sets $I,C^\prime,C^{\prime\prime}$ where for all ${\bf c}\in I\cup C^\prime,$ $\loc({\bf c})=r,$ and for all ${\bf c^{\prime\prime}}\in C^{\prime\prime},$ $\loc({\bf c}^{\prime\prime})>r.$ It is natural to ask how low can the locality of symbols ${\bf c}^{\prime\prime}\in C^{\prime\prime}$ be. In this section we address and resolve this question.

\subsection{Parity locality lower bound}

We begin with a lower bound.

\begin{theorem}
\label{Theorem:ParityLocalityLower}
Let $\mathcal C$ be a systematic optimal $(r,d)$-code with parameters
$[n,k,d]_q.$ Suppose $d<r+3,$ $r<k,$ and $r\mid k.$ Then some $\frac{k}{r}$ parity symbols of $\mc{C}$ have locality exactly $r,$ and $d-2$ other parity symbols of $\mc{C}$ have locality no less than
\begin{equation}
\label{Eqn:ParityLocLower}
k-\left(\frac{k}{r}-1\right)(d-3).
\end{equation}
\end{theorem}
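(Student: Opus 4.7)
The plan is to show that for any recovery set $R$ of ${\bf c}''_j$ one can build a large rank-deficient subset of $C$, from which \factref{Fact:Dist} forces $|R|$ to be at least the claimed bound. Suppose ${\bf c}''_j = \sum_{{\bf v}\in R} \lambda_{\bf v} {\bf v}$ and partition $R = U \sqcup A \sqcup B$, where $U \subseteq I$, $A \subseteq C'$, and $B \subseteq C'' \setminus \{{\bf c}''_j\}$ have sizes $u$, $t$, and $b \leq d-3$. First I would isolate the residual
\[
{\bf w} \;:=\; {\bf c}''_j \;-\; \sum_{{\bf c}''_b \in B} \gamma_b\, {\bf c}''_b \;=\; \sum_{e_\ell \in U} \alpha_\ell\, e_\ell \;+\; \sum_{{\bf c}'_a \in A} \beta_a\, {\bf c}'_a,
\]
so that $\Supp({\bf w}) \subseteq U \cup \bigcup_{a \in A} S_a$. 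Let $Z := \{\ell \in [k] : {\bf w}(\ell) = 0\}$ and let $m := |\{a : S_a \cap Z \neq \emptyset\}|$.

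Next I would form the candidate set
\[
S \;:=\; \{e_\ell : \ell \in [k]\setminus Z\} \;\cup\; \{{\bf c}'_a : S_a \cap Z = \emptyset\} \;\cup\; \{{\bf c}''_j\} \;\cup\; B \;\subseteq\; C.
\]
Because the supports of the vectors in $C'$ partition $[k]$, every ${\bf c}'_a$ that appears in $S$ lies in $\Span(\{e_\ell : \ell \in [k]\setminus Z\})$ and contributes nothing to the rank. And ${\bf w}|_Z = 0$ is exactly the statement that the $b+1$ vectors $\{{\bf c}''_j,{\bf c}''_b : b\in B\}$ become linearly dependent when restricted to $Z$, so together they enlarge the rank by at most $b$. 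Combining these,
\[
|S| \;=\; (k - |Z|) + (k/r - m) + 1 + b, \qquad \Rk(S) \;\leq\; (k - |Z|) + b.
\]
Whenever $|Z| \geq b+1$, so that $\Rk(S) \leq k-1$, \factref{Fact:Dist} gives $|S| \leq n-d = k + k/r - 2$, which simplifies to the key distance-derived inequality
\[
|Z| + m \;\geq\; b + 3.
\]

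The last step is a per-block cost analysis converting this inequality into the promised lower bound on $|R| = u+t+b$. For each block $a$ set $z_a := |Z \cap S_a|$; since ${\bf w}|_{S_a}$ is expressed through $R \cap (I \cup \{{\bf c}'_a\})$, blocks with $a \notin A$ force $|R \cap I \cap S_a| \geq r - z_a$, while $a \in A$ adds one local parity and permits further savings on the information count only when ${\bf w}|_{S_a}$ is (partially) proportional to ${\bf c}'_a|_{S_a}$. Summing per block gives $|R| \geq k + b - \Phi$, where $\Phi$ aggregates the block-by-block savings. Using $|Z|\leq rm$ together with $|Z| + m \geq b+3$ (and the fact that at most $b$ independent linear constraints can be imposed on ${\bf w}$ through its $b$ free parameters $\gamma_b$) should cap $\Phi$ at $(k/r - 1)(d-3) + b$, yielding $|R| \geq k - (k/r-1)(d-3)$.

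The main obstacle will be precisely this final combinatorial step: carefully bookkeeping the three flavours of block (fully vanishing under ${\bf w}$, fully proportional to ${\bf c}'_a$, or neither) and matching the per-block savings against the single global constraint $|Z|+m \geq b+3$. The hypothesis $d<r+3$ enters decisively by forcing $b \leq r-1$, which excludes the degenerate regimes (most notably ${\bf w}=0$, corresponding to ${\bf c}''_j \in \Span(B)$) in which the candidate $S$ above already has full rank and yields no distance contradiction; the boundary cases $b=0$ and $b=d-3$ warrant separate verification.
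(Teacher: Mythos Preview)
Your approach diverges from the paper's, and the step you yourself flag as ``the main obstacle'' is a genuine gap that the outlined argument does not close.

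The paper argues row by row. For each block index $j$ it observes that the subcode $\mc{C}_j=\{\mc{C}({\bf x}):\Supp({\bf x})\subseteq S_j\}$ has support $R_j=\{{\bf c}'_j\}\cup\{{\bf e}_i:i\in S_j\}\cup C''$, dimension $r$, length $r+d-1$, and distance at least $d$; hence it is MDS. Restricting the recovery relation ${\bf c}''={\textstyle\sum_{{\bf v}\in R}}\lambda_{\bf v}{\bf v}$ to the coordinates $S_j$ annihilates every ${\bf v}\notin R_j$ and produces a nonzero codeword of $\mc{C}_j^\perp$ supported on $\{{\bf c}''\}\cup(R\cap R_j)$. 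Since the dual of an $[r+d-1,r,d]$ MDS code has minimum weight $r+1$, this forces $|R\cap R_j|\ge r$ for every $j$. Summing over the $k/r$ rows, which overlap only in $C''$ (and $|R\cap C''|\le d-3$), gives the bound immediately.

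Your single global rank-deficient set yields only the inequality $|Z|+m\ge b+3$, and only under the precondition $|Z|\ge b+1$. That precondition fails exactly in the configurations you most need to exclude. Take $U=\emptyset$, $A=C'$, and any admissible $B$: then ${\bf w}=\sum_a\beta_a{\bf c}'_a$ has full support, $Z=\emptyset$, your set $S$ has full rank, and \factref{Fact:Dist} says nothing---yet $|R|=k/r+b$, which for $d<r+2$ lies far below the claimed bound. The paper's per-row MDS inequality rules this out instantly (it would give $|R\cap R_j|=1+b<r$), but nothing in your construction does. The same failure already occurs at $b=0$, where $Z=\emptyset$ always and you must show $|R|\ge k$. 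Your per-block cost analysis cannot repair this: for $a\in A$ the only available bound is $|U\cap S_a|\ge z_a$, so the saving is at most $r-z_a-1$; with all blocks in $A$ and all $z_a=0$ this permits total savings $(k/r)(r-1)$, which exceeds your target $(k/r-1)(d-3)+b$ whenever $r>d-2$. The remark about ``$b$ free parameters $\gamma_b$'' does not help either, since those coefficients are fixed by the recovery relation and varying them destroys the containment ${\bf w}\in\Span(U\cup A)$.

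In short, the missing ingredient is precisely a local obstruction of the MDS type in each row; a single global rank-deficient set cannot detect it. You also do not address the ``locality exactly $r$'' claim for the $k/r$ light parities, though that part does follow quickly from \thmref{Theorem:Structure} together with the same MDS observation applied to $\mc{C}_j$.
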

\begin{proof}
Theorem~\ref{Theorem:Canonical} implies that $\mc{C}$ is a canonical code. Let $C=I\cup C^\prime \cup C^{\prime\prime}$ be the canonical partition of the coordinates of $\mc{C}.$ Clearly, for all $\frac{k}{r}$ symbols ${\bf c}^\prime \in C^\prime$ we have $\loc({\bf c}^\prime)\leq r.$ We now prove lower bounds on the locality of symbols in $C^\prime\cup C^{\prime\prime}.$

We start with symbols ${\bf c}^{\prime\prime}\in C^{\prime\prime}.$ For every $j\in [k/r]$ we define a subset $R_j\subseteq C$ that we call a {\it row}. Let $S_j=\Supp\left({\bf c}^{\prime}_j\right).$ The $j$-th row contains the vector ${\bf c}^\prime_j,$ all $r$ unit vectors in the support of ${\bf c}^\prime_j$ and the set $C^{\prime\prime}.$
$$
R_j = \{{\bf c}^\prime_j \}\cup \left(\bigcup_{i\in S_j} {\bf e}_i\right) \cup C^{\prime\prime}.
$$
Observe that restricted to $I\cup C^\prime$ rows $\{R_j\}_{j\in [k/r]}$ form a partition. Consider an arbitrary symbol ${\bf c}^{\prime\prime}\in C^{\prime\prime}.$ Let $\ell=\loc({\bf c}^{\prime\prime}).$ We have
\begin{equation}
\label{Eqn:CPrimeSum}
{\bf c}^{\prime\prime} = \sum_{i\in L} {\bf c}_i,
\end{equation}
where $|L|=\ell.$ In what follows we show that for each row $R_j,$
\begin{equation}
\label{Eqn:RowIntersect}
|R_j \cap L|\geq r
\end{equation}
needs to hold. It is not hard to see that this together with the structure of the sets $\{R_j\}$ implies inequality~(\ref{Eqn:ParityLocLower}). To prove~(\ref{Eqn:RowIntersect}) we consider the code
\begin{equation}
\label{Eqn:CodeRestrictJ}
\mc{C}_j=\{\mc{C}({\bf x})\mid {\bf x}\in \mathbb{F}_q^k\mathrm{\ such\ that\ }\Supp({\bf x})\subseteq S_j\}.
\end{equation}
It is not hard to see that $\Supp(\mc{C}_j)=R_j$ and $\dim\mc{C}_j=r.$ Observing that the distance of the code $\mc{C}_j$ is at least $d$ and $|R_j|=r+d-1$ we conclude that (restricted to its support) $\mc{C}_j$ is an MDS code. Thus any $r$ symbols of $\mc{C}_j$ are independent. It remains to note that~(\ref{Eqn:CPrimeSum}) restricted to coordinates in $S_j$ yields a non-trivial dependency of length at most $|R_j\cap L|+1$ between the symbols of $\mc{C}_j.$

We proceed to the lower bound on the locality of symbols in $C^\prime.$ Fix an arbitrary ${\bf c}^\prime_j \in C^\prime.$ A reasoning similar to the one above implies that if $\loc({\bf c}_j)<r;$ then there is a dependency of length below $r+1$ between the coordinates of the $[r+d-1,r,d]_q$ code $\mc{C}_j$ (defined by~(\ref{Eqn:CodeRestrictJ})) restricted to its support.
\end{proof}

Observe that the bound~(\ref{Eqn:ParityLocLower}) is close to $k$ only when $r$ is large and $d$ is small. In other cases theorem~\ref{Theorem:ParityLocalityLower} does not rule out existence of canonical codes with low locality for all symbols (including those in $C^{\prime\prime}$). In the next section we show that such codes indeed exist. In particular we show that the bound~(\ref{Eqn:ParityLocLower}) can be always met with equality.

\subsection{Parity locality upper bounds}

Our main results in this section are given by theorems~\ref{Theorem:ParityLocalityUpperGeneral} and~\ref{Theorem:ParityLocalityUpperD4}. Theorem~\ref{Theorem:ParityLocalityUpperGeneral} gives a general upper bound matching the lower bound of theorem~\ref{Theorem:ParityLocalityLower}. The proof is not explicit. Theorem~\ref{Theorem:ParityLocalityUpperD4} gives an explicit family of codes in the narrow case of $d=4.$ We start by introducing some concepts we need for the proof of theorem~\ref{Theorem:ParityLocalityUpperGeneral}.

\begin{definition}
\label{Def:KCore}
Let $L\subseteq \mathbb{F}_q^n$ be a linear space and $S\subseteq [n]$ be a set, $|S|=k.$ We say that $S$ is a $k$-core for $L$ if for all vectors ${\bf v}\in L,$ $\Supp({\bf v})\not\subseteq S.$
\end{definition}

It is not hard to verify that $S$ is a $k$-core for $L,$ if and only if $S$ is a subset of some set of information coordinates in the space $L^{\perp}.$ In other words $S$ is a $k$-core for $L,$ if and only $k$ columns in the $(n-\dim L)$-by-$n$ generator matrix of $L^\perp$ that correspond to elements of $S$ are linearly independent.

\begin{definition}
\label{Def:GenPos}
Let $L\subseteq \mathbb{F}_q^n$ be a linear space. Let $\{{\bf c}_1,\ldots,{\bf c}_n\}$ be a sequence of $n$ vectors in $\mathbb{F}_q^k.$ We say that vectors $\{{\bf c}_i\}$ are in general position subject to $L$ if the following conditions hold:
\begin{enumerate}
\item For all vectors ${\bf v}\in L$ we have $\sum_{i=1}^n{\bf v}(i){\bf c}_i=0;$

\item For all $k$-cores $S$ of $L$ we have $\Rk\left(\{{\bf c}_i\}_{i\in S}\right)=k.$
\end{enumerate}
\end{definition}

The next lemma asserts existence of vectors that are in general position subject to an arbitrary linear space provided the underlying field is large enough.

\begin{lemma}
\label{Lemma:GenPosExist}
Let $L\subseteq \mathbb{F}_q^n$ be a linear space and $k$ be a positive integer. Suppose $q > kn^k;$ then there exists a family of vectors $\{{\bf c}_i\}_{i\in [n]}$ in $\mathbb{F}_q^k$ that are in general position subject to $L.$
\end{lemma}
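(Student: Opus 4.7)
My plan is to use the probabilistic method. I will realize the family $\{{\bf c}_i\}_{i\in[n]}$ as the columns of a $k\times n$ matrix $M$ over $\F_q$, where each row of $M$ is drawn independently and uniformly at random from $L^\perp$. Condition (1) of Definition~\ref{Def:GenPos} then holds automatically: for any ${\bf v}\in L$ the $j$-th coordinate of $\sum_{i=1}^n {\bf v}(i){\bf c}_i$ is the inner product of the $j$-th row of $M$ with ${\bf v}$, which vanishes since each row lies in $L^\perp$. So the entire burden is to show that, with positive probability, condition (2) holds for every $k$-core $S$ simultaneously.

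Fix a $k$-core $S$ and let $M_S$ be the $k\times k$ submatrix of $M$ formed by the columns indexed by $S$; the requirement $\Rk(\{{\bf c}_i\}_{i\in S})=k$ is equivalent to $M_S$ being non-singular. The key observation, which I expect to be the main technical point, is that the projection $\pi_S:L^\perp\to \F_q^S$ is surjective. This follows from the remark after Definition~\ref{Def:KCore}: $S$ being a $k$-core is equivalent to the $k$ columns of a generator matrix of $L^\perp$ indexed by $S$ being linearly independent, and by the standard shortening/puncturing duality this is in turn equivalent to $\pi_S(L^\perp)=\F_q^S$. Consequently, since each row of $M$ is uniform in $L^\perp$, each row of $M_S$ is uniform in $\F_q^k$, and the $k$ rows of $M_S$ are mutually independent.

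I then bound the failure probability. The chance that $k$ independent uniform vectors in $\F_q^k$ are linearly dependent is at most $\sum_{j=0}^{k-1} q^{j-k}\leq k/q$, since conditioned on the first $j$ rows being linearly independent, the next row falls into their $j$-dimensional span with probability $q^{j-k}$. The number of $k$-cores is at most $\binom{n}{k}\leq n^k$, so by the union bound the probability that some $k$-core fails condition (2) is at most $kn^k/q$, which is strictly less than $1$ by the hypothesis $q>kn^k$. Hence there exists at least one realization of $M$ for which the columns form a family in general position subject to $L$, proving the lemma.
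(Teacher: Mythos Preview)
Your proposal is correct and follows essentially the same approach as the paper: choose the rows of $M$ uniformly and independently from $L^\perp$, observe that condition~(1) is automatic and that for each $k$-core $S$ the rows of $M_S$ are independent uniform vectors in $\F_q^k$, and then apply a union bound over $k$-cores using the crude estimates $\binom{n}{k}\le n^k$ and $\Pr[\text{singular}]\le k/q$. The only cosmetic difference is that the paper writes the invertibility probability exactly as $\prod_{i=1}^k(1-q^{-i})$ before bounding it, whereas you go directly to the bound $k/q$; your justification of the surjectivity of $\pi_S$ is also more explicit than the paper's one-line assertion.
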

\begin{proof}
We obtain a matrix $M\in \mathbb{F}_q^{k\times n}$ picking the rows of $M$ at random (uniformly and independently) from the linear space $L^\perp.$ We choose vectors $\{{\bf c}_i\}$ to be the columns of $M.$ Observe that the first condition in definition~\ref{Def:GenPos} is always satisfied. Further observe that our choice of $M$ induces a uniform distribution on every set of $k$ columns of $M$ that form a $k$-core. The second condition in definition~\ref{Def:GenPos} is satisfied as long as all $k$-by-$k$ minors of $M$ that correspond to $k$-cores are invertible. This happens with probability at least
$$
1-\left(n\atop k\right)\cdot \left(1- \prod_{i=1}^k\left(1-\frac{1}{q^i}\right)\right)\geq
1-\left(n\atop k\right)\cdot \left(1- \left(1-\frac{1}{q}\right)^k\right)\geq
1-n^k\cdot \frac{k}{q}>0.
$$
This concludes the proof.
\end{proof}

We proceed to the main result of this section.

\begin{theorem}
\label{Theorem:ParityLocalityUpperGeneral}
Let $2 < d < r+3,$ $r<k,$ $r\mid k.$ Let $q>kn^k$ be a prime power. Let $n=k+\frac{k}{r}+d-2.$ There exists a systematic $[n,k,d]_q$ code $\mc{C}$ of information locality $r,$ where $\frac{k}{r}$ parity symbols have locality $r,$ and $d-2$ other parity symbols have locality $k-\left(\frac{k}{r}-1\right)(d-3).$
\end{theorem}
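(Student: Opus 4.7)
The plan is to apply \lemmaref{Lemma:GenPosExist} to a carefully constructed subspace $L\subseteq \Fq^n$ of dimension $n-k$. I design $L$ so that (i)~a natural basis for $L$ has supports that directly encode the desired parity-check structure (thereby witnessing the claimed localities), and (ii)~the column matroid of the parity-check matrix $H$ whose rows generate $L$ has no circuit of size below $d$. Given such $L$, \lemmaref{Lemma:GenPosExist} produces $\{{\bf c}_i\}\subseteq \Fq^k$ in general position subject to $L$; condition~2 of \namedref{Definition}{Def:GenPos} together with~(ii) then implies that every $(n-d+1)$-subset of $[n]$ contains a $k$-core, hence the corresponding ${\bf c}_i$'s span $\Fq^k$, so by \factref{Fact:Dist} the resulting code $\mathcal{C}$ has distance $d$.

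For the basis of $L$ I take $\{{\bf v}_j\}_{j\in[k/r]}\cup\{{\bf w}_i\}_{i\in[d-2]}$. Each ${\bf v}_j$ is supported on $S_j\cup\{k+j\}$ with a nonzero entry at $k+j$ and generic nonzero entries on $S_j$, directly witnessing locality $r$ for the information coordinates in $S_j$ and for ${\bf c}^{\prime}_j$. Each ${\bf w}_i$ is supported on $\Omega_i\cup C^{\prime\prime}$ for some $\Omega_i\subseteq[k]$ with $|\Omega_i\cap S_j|=r-d+3$ for every $j$; the restriction of ${\bf w}_i$ to $C^{\prime\prime}$ is the $i$-th row of a square matrix $\Gamma=(\gamma_{i,i'})$, and the entries on $\Omega_i$ are generic nonzero. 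Since $|\Supp({\bf w}_i)| = (k/r)(r-d+3)+(d-2) = 1+k-\left(\tfrac{k}{r}-1\right)(d-3)$, the single parity check ${\bf w}_i$ itself witnesses locality $k-\left(\tfrac{k}{r}-1\right)(d-3)$ for ${\bf c}^{\prime\prime}_i$.

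Verifying~(ii) is the heart of the argument. By the standard genericity principle for matrices with prescribed zero pattern, the columns of $H$ indexed by $T\subseteq[n]$ are generically linearly independent iff the term rank of $H|_T$ equals $|T|$, which by K\"onig/Hall amounts to the existence of a system of distinct representatives matching $T$ into the rows of $H$. I choose the sets $\Omega_i\cap S_j$ as overlapping sliding windows in $S_j$---feasible because $d<r+3$ yields $(d-2)(r-d+3)\ge r$, so the $\Omega_i$'s jointly cover $S_j$ and give each information coordinate support in sufficiently many heavy-parity rows---which forces Hall's matching condition for every $T$ with $|T|\le d-1$. With this combinatorial design fixed, the polynomial non-vanishing conditions---(ii) itself, invertibility of $\Gamma$ (so that $[k]$ is a $k$-core, the code is systematic on $[k]$, and each ${\bf c}^{\prime\prime}_i$ has the full canonical weight $k$), and linear independence of $\{{\bf v}_j,{\bf w}_i\}$---are each not identically zero in the free coefficients, so by Schwartz--Zippel over $\Fq$ with $q>kn^k$ they are simultaneously satisfiable; \lemmaref{Lemma:GenPosExist} applied to the resulting $L$ then supplies the desired $\{{\bf c}_i\}$.

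The main obstacle is the combinatorial verification of Hall's matching condition for every $T$ with $|T|\le d-1$: the easy cases are $T$ intersecting $C^{\prime\prime}$ (whose columns already cover the entire heavy-parity row block) and $T$ spread thinly across rows, while the tight case is $T\subseteq S_j$ concentrated within a single group, where the sliding-window design and the slack coming from $d<r+3$ are exactly what is needed.
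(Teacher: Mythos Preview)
Your approach has a genuine gap in the distance argument. Because you take $\dim L=n-k$, the space $L$ \emph{is} the full dual $\mc{C}^\perp$, so the code is completely determined by the support pattern of your basis $\{{\bf v}_j\}\cup\{{\bf w}_i\}$ plus generic nonzero coefficients; applying \lemmaref{Lemma:GenPosExist} adds nothing beyond choosing a generator matrix for $L^\perp$. As you correctly note, columns of $H$ indexed by $T$ are then generically independent iff $T$ matches into the rows of $H$. But Hall's condition does \emph{not} hold for all $|T|\le d-1$ under your design, and for some admissible parameters it \emph{cannot} hold for any choice of the $\Omega_i\subseteq[k]$ obeying your size constraint.

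Take $k=8$, $r=4$, $d=5$ (so $t=2$, $d-2=3$, and each $|\Omega_i|=t(r-d+3)=4$). Write $\deg_{\bf w}(p)=|\{i:p\in\Omega_i\}|$ for $p\in[k]$; then $\sum_{p}\deg_{\bf w}(p)=\sum_i|\Omega_i|=12$. Since each light-parity column $k{+}j$ meets only row ${\bf v}_j$, the set $T=\{p,k{+}j_p\}$ forces $\deg_{\bf w}(p)\ge 1$, while the sets $T=\{p,q,k{+}j\}$ (with $p,q\in S_j$) and $T=\{p,q,k{+}j_p,k{+}j_q\}$ (with $p\in S_{j_p}$, $q\in S_{j_q}$, $j_p\ne j_q$) both force $|\{i:p\in\Omega_i\}\cup\{i:q\in\Omega_i\}|\ge 2$ for every pair $p\ne q$ in $[k]$. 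Hence at most one element of $[k]$ can have each particular singleton as its ${\bf w}$-neighborhood, so at most $d-2=3$ of the eight elements have ${\bf w}$-degree $1$, giving $\sum_p\deg_{\bf w}(p)\ge 3\cdot 1+5\cdot 2=13>12$, a contradiction. Thus no choice of $\Omega_1,\Omega_2,\Omega_3$ makes Hall hold, and your code necessarily has a word of weight at most $4<d$. The case you flag as tight, $T\subseteq S_j$, is not the real obstruction; it is $T$ containing light-parity columns together with information symbols from different groups.

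The paper sidesteps this by taking $L$ of dimension only $t+1$, spanned by the $t$ local checks ${\bf v}_1,\ldots,{\bf v}_t$ and a \emph{single} heavy vector ${\bf v}_0$ of support size $k-(t-1)(d-3)+1$ (with constant nonzero entries, while each ${\bf v}_i$ has distinct nonzero entries). Since $\dim L<n-k$ for $d>3$, the code is not determined by $L$; the extra freedom in choosing $\{{\bf c}_i\}$ in general position subject to this smaller $L$ via \lemmaref{Lemma:GenPosExist} is precisely what supplies the missing distance. The distance proof then reduces to showing every $(k+t-1)$-set contains a $k$-core of this small $L$, which is a direct case analysis using the explicitly controlled supports in $\Span\{{\bf v}_0,\ldots,{\bf v}_t\}$.
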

\begin{proof}
Let $t=\frac{k}{r}.$ Fix some $t+1$ subsets $P_0,P_1,\ldots,P_t$ of $[n]$ subject to the following constraints:
\begin{enumerate}
\item $|P_0|=k-\left(t-1\right)\left(d-3\right)+1;$

\item For all $i\in [t],$ $|P_i|=r+1;$

\item For all $i,j\in [t]$ such that $i\ne j,$ $P_i\cap P_j=\emptyset;$

\item For all $i\in [t],$ $|P_0\cap P_i|=r-d+3.$
\end{enumerate}
For every set $P_i,$ $0\leq i\leq t$ we fix a vector ${\bf v}_i\in\mathbb{F}_q^n,$ such that $\Supp({\bf v}_i)=P_i.$ We ensure that non-zero coordinates of ${\bf v}_0$ contain the same value. We also ensure that for all $i\in [t]$ non-zero coordinates of ${\bf v}_i$ contain distinct values. The lower bound on $q$ implies that these conditions can be met. For a finite set $A$ let $A^\circ$ denote a set that is obtained from $A$ by dropping {\it at most} one element. Note that for all $i\in [t]$ and all non-zero $\alpha,\beta$ in $\mathbb{F}_q$ we have
\begin{equation}
\label{Eqn:DropCombination}
\Supp(\alpha{\bf v}_0+\beta{\bf v}_i)= (P_0\setminus P_i) \sqcup (P_0\cap P_i)^\circ \sqcup (P_i\setminus P_0).
\end{equation}
Consider the space $L=\Span\left(\{{\bf v}_i\}_{0\leq i\leq t}\right).$ Let $M=P_0\setminus \bigsqcup_{i=1}^{t}P_i.$ Observe that
$$|M|=k-(t-1)(d-3)+1-t(r-d+3)=d-2.$$
By~(\ref{Eqn:DropCombination}) for any ${\bf v}\in L$ we have
\begin{equation}
\label{Eqn:LSupport}
\Supp({\bf v})=
\left[
\begin{array}{ll}
\bigsqcup\limits_{i\in T} P_i, & \mbox{ for some $T\subseteq [t]$ OR} \\
M \bigsqcup\limits_{i\in [n]\setminus T} (P_0\cap P_i) \bigsqcup\limits_{i\in T} (P_0\cap P_i)^\circ \bigsqcup\limits_{i\in T} (P_i\setminus P_0) & \mbox{ for some $T\subseteq [t].$}
\end{array}
\right.
\end{equation}
Observe that a set $K\subseteq [n],$ $|K|=k$ is a $k$-core for $L$ if and only if for all $i\in [t],\ P_i\not\subseteq K$ and
\begin{equation}
\label{Eqn:KCoreCriterium}
\left[
\begin{array}{l}
M \not\subseteq K; \mbox{ OR } \\
M \subseteq K\mbox{ and $\exists i\in [t]$ such that}
\left[
\begin{array}{l}
|P_i\cap P_0\cap K|<r-d+2; \mbox{ OR }\\
|P_i\cap P_0\cap K|=r-d+2\mbox{ and } P_i\setminus P_0 \not\subseteq K. \\
\end{array}
\right. \\
\end{array}
\right.
\end{equation}
Let $I\subseteq [n]$ be such that $M\cap I=\emptyset$ and for all $i\in [t],$ $|I\cap P_i|=r.$ By~(\ref{Eqn:KCoreCriterium}) $I$ is a $k$-core for $L.$ We use lemma~\ref{Lemma:GenPosExist} to obtain vectors $\{{\bf c}_i\}_{i\in [n]}\in \mathbb{F}_q^k$ that are in general position subject to the space $L.$ We choose vectors $\{{\bf c}_i\}_{i\in I}$ as our basis for $\mathbb{F}_q^k$ and consider the code $\mathcal{C}$ defined as in~(\ref{Eqn:CodeDefinition}).

In it not hard to see that $\mathcal{C}$ is a systematic code of information locality $r.$ All $t$ parity symbols in the set $\left(\bigsqcup_{i\in [t]}P_i\right)\setminus I$ also have locality $r.$ Furthermore all $d-2$ parity symbols in the set $M$ have locality $k-(t-1)(d-3).$ It remains to prove that the code $\mc{C}$ has distance
\begin{equation}
\label{Eqn:DesiredDist}
d=n-k-t+2.
\end{equation}
According to Fact~\ref{Fact:Dist} the distance of $\mc{C}$ equals $n-|S|$ where $S\subseteq [n]$ is the largest set such that vectors $\{{\bf c}_i\}_{i\in S}$ do not have full rank. By definition~\ref{Def:GenPos} for any $k$-core $K$ of $L$ we have $\Rk\{{\bf c}_i\}_{i\in K}=k.$ Thus in order to establish~(\ref{Eqn:DesiredDist}) it suffices to show that every set $S\subseteq [n]$ of size $k+t-1$ contains a $k$-core of $L.$ Our proof involves case analysis. Let $S\subseteq [n],$ $|S|=k+t-1$ be an arbitrary set. Set
$$b=\#\{i\in [t]\mid P_i\subseteq S\}.$$
Note that since $t(r+1)>|S|$ we have $b\leq t-1.$

{\bf Case 1:} $M\not\subseteq S.$ We drop $t-1$ elements from $S$ to obtain a set $K\subseteq S,$ $|K|=k$ such that for all $i\in [t],$ $P_i\not\subseteq K.$ By~(\ref{Eqn:KCoreCriterium}) $K$ is a $k$-core.

{\bf Case 2:} $M\subseteq S$ and $b\leq t-2.$ We drop $t-1$ elements from $S$ to obtain a set $K\subseteq S,$ $|K|=k$ such that $M\not\subseteq K$ and for all $i\in [t],$ $P_i\not\subseteq K.$ By~(\ref{Eqn:KCoreCriterium}) $K$ is a $k$-core.

{\bf Case 3:} $M\subseteq S$ and $b=t-1.$ Let $i\in [t]$ be such that $P_i\not\subseteq S.$ Such $i$ is unique. Observe that
$$|P_i\cap S| = k+t-1 - (d-2) - (t-1)(r+1)=r-d+2.$$
Also observe that $|P_i\setminus P_0|=r+1-(r-d+3)=d-2\geq 1.$ Combining the last two observations we conclude that either
\begin{equation}
\label{Eqn:CaseThreeUnfold}
\left[
\begin{array}{l}
|P_i\cap P_0\cap S|<r-d+2; \mbox{ OR }\\
|P_i\cap P_0\cap S|=r-d+2\mbox{ and } P_i\setminus P_0 \not\subseteq S. \\
\end{array}
\right. \\
\end{equation}
Finally, we drop $t-1$ elements from $S$ to obtain a set $K\subseteq S,$ $|K|=k$ such that for all $i\in [t],$ $P_i\not\subseteq K.$ By~(\ref{Eqn:KCoreCriterium}) and~(\ref{Eqn:CaseThreeUnfold}) $K$ is a $k$-core.
\end{proof}

Theorem~\ref{Theorem:ParityLocalityUpperGeneral} gave a general construction of $(r,d)$-codes that are optimal not only with respect to information locality and redundancy but also with respect to locality of parity symbols. That theorem however is weak in two respects. Firstly, the construction is not explicit. Secondly, the construction requires a large underlying field. The next theorem gives an explicit construction that works even over small fields in the narrow case of codes of distance $4.$

\begin{theorem}
\label{Theorem:ParityLocalityUpperD4}
Let $r<k,$ $r\mid k$ be positive integers. Let $q\geq r+2$ be a prime power. Let $n=k+\frac{k}{r}+2.$ There exists a systematic $[n,k,4]_q$ code $\mc{C}$ of information locality $r,$ where $\frac{k}{r}$ parity symbols have locality $r,$ and $2$ other parity symbols have locality $k-\frac{k}{r}+1.$
\end{theorem}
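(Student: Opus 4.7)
The plan is to take a single $[r+3,r,4]$ MDS inner code and use it in parallel across all $t=k/r$ groups, with the three non-information positions of the inner code played by one distinct local parity per group together with two globally shared heavy parities. Concretely, I would partition $[k]$ into groups $S_1,\ldots,S_t$ of size $r$, set ${\bf c}'_j=\sum_{i\in S_j}x_i$, and define
\[
{\bf c}''_1=\sum_{j=1}^{t}\sum_{s=1}^{r}\gamma_s\,x_{S_j(s)},\qquad {\bf c}''_2=\sum_{j=1}^{t}\sum_{s=1}^{r}\delta_s\,x_{S_j(s)},
\]
where $(\gamma_s,\delta_s)_{s=1}^{r}$ are group-independent coefficients in $\F_q$ chosen so that the $3\times(r+3)$ matrix
\[
H_{\mathrm{loc}}=\begin{pmatrix} 1 & \cdots & 1 & -1 & 0 & 0\\ \gamma_1 & \cdots & \gamma_r & 0 & -1 & 0\\ \delta_1 & \cdots & \delta_r & 0 & 0 & -1\end{pmatrix}
\]
is the parity-check matrix of an $[r+3,r,4]$ MDS code (every $3\times 3$ minor non-zero). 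A concrete choice: take $r+2$ distinct elements $x_1,\ldots,x_{r+2}\in\F_q$, write $a=x_{r+1}$, $b=x_{r+2}$, and set $\gamma_s=(x_s-a)/(b-x_s)$ and $\delta_s=-(b-a)(x_s-a)$ for $s=1,\ldots,r$.

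Information symbols and the $t$ local parities have locality $r$ directly from the group-sum relation. The MDS property of $H_{\mathrm{loc}}$ guarantees $\delta_r\neq 0$, so $\mu=-\gamma_r/\delta_r$ is defined and the identity
\[
{\bf c}''_1+\mu\,{\bf c}''_2 \;=\; \sum_{j=1}^{t}\sum_{s=1}^{r-1}(\gamma_s+\mu\delta_s)\,x_{S_j(s)}
\]
writes ${\bf c}''_1$ as an $\F_q$-linear combination of ${\bf c}''_2$ and $t(r-1)=k-t$ information symbols, so $\loc({\bf c}''_1)\le k-t+1$; a symmetric argument handles ${\bf c}''_2$, and Theorem~\ref{Theorem:ParityLocalityLower} upgrades both bounds to equalities. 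For the distance I would check that every three columns of the global parity-check matrix are linearly independent, splitting cases on the number of distinct groups the three chosen columns touch. When all three lie in a single group's restriction, the MDS property of $H_{\mathrm{loc}}$ does the job directly. When they span two or more groups, the first $t$ coordinates---which read $\pm e_j$ on a group-$j$ column and vanish on a heavy-parity column---force independence via the linear independence of distinct standard basis vectors. The distance cannot exceed $4$ since any single group's restriction already contains a weight-$4$ MDS codeword.

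The main obstacle is justifying the explicit coefficient choice when $q$ is as small as $r+2$. Equivalently, one needs an $(r+3)$-arc in the projective plane over $\F_q$ that contains the three coordinate points, which I would obtain via the classical rational normal conic construction: the columns $(1,x_s,x_s^2)^T$ for $s\le r+2$ together with the column $(0,0,1)^T$ at infinity furnish $r+3$ points on the conic $Y^2=XZ$, any three distinct points on a conic are non-collinear, and a fixed invertible $3\times 3$ row transformation followed by non-zero scalings of the first $r$ columns---chosen to move the three columns indexed by $a$, $b$, and infinity to the signed standard-basis shape and to normalise the first row to all ones---produces exactly the $\gamma_s,\delta_s$ formulas above. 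Row transformations and column scalings preserve the ``every $3\times 3$ minor non-zero'' property, so the resulting $H_{\mathrm{loc}}$ is automatically superregular, and the construction uses only that $\F_q$ contains $r+2$ distinct elements.
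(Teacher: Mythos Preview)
Your construction is exactly the paper's: take a systematic $[r+3,r,4]$ MDS inner code and apply it in parallel across the $t=k/r$ groups, with the two non-local parities summed globally; the paper writes this as $\mc{C}(\mathbf{x})=(\mathbf{y}_1,\ldots,\mathbf{y}_t,\ \mathbf{p}_0\!\cdot\!\mathbf{y}_1,\ldots,\mathbf{p}_0\!\cdot\!\mathbf{y}_t,\ \mathbf{p}_1\!\cdot\!\sum_i\mathbf{y}_i,\ \mathbf{p}_2\!\cdot\!\sum_i\mathbf{y}_i)$, and your locality computation for the heavy parities is the same MDS relation the paper records as $\mathbf{p}_1=\sum_{j<r}\alpha_j\mathbf{e}_j+\alpha_r\mathbf{p}_2$. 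The only real difference is in verifying $d=4$: the paper gives a two-step decoding procedure (repair every block with a single erasure, then invoke the inner MDS decoder on the one remaining block), whereas you argue that every three columns of the global parity-check matrix are independent.

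Your minor-based argument is fine but the ``two or more groups'' case needs one more line. If two of the three columns are group-specific to the \emph{same} group $j_1$ and the third is group-specific to $j_2\neq j_1$, the projections to the first $t$ rows are $\pm e_{j_1},\pm e_{j_1},\pm e_{j_2}$, which are \emph{not} independent; they only force the $j_2$-coefficient to vanish. You then finish by observing that the two surviving columns lie in a single group's restriction and hence are independent by the MDS property of $H_{\mathrm{loc}}$. With that patch the proof is complete and equivalent to the paper's.
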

\begin{proof}
Fix an arbitrary systematic $[r+3,r,4]_q$ code $\mc{E}.$ For instance, one can choose $\mc{E}$ to be a Reed Solomon code. Let
$$ \mc{E}({\bf y}) = ({\bf y},{\bf p}_0\cdot{\bf y},{\bf p}_1\cdot{\bf y},{\bf p}_2\cdot{\bf y}). $$
Since $\mc{E}$ is a MDS code all vectors $\{{\bf p}_i\}$ have weight $r.$ Thus for some non-zero $\{\alpha_j\}_{j\in [r]}$ we have
\begin{equation}
\label{Eqn:MDSRecover}
{\bf p}_1 = \sum\limits_{j=1}^{r-1}\alpha_j {\bf e}_j + \alpha_r{\bf p}_2,
\end{equation}
where $\{{\bf e}_j\}_{j\in [r]}$ are the $r$-dimensional unit vectors. To define a systematic code $\mc{C}$ we partition the input vector ${\bf x}\in \mathbb{F}_q^n$ into $t=\frac{k}{r}$ vectors ${\bf y}_1,\ldots,{\bf y}_t\in \mathbb{F}_q^r.$ We set
\begin{equation}
\label{Eqn:GluedRS}
\mc{C}({\bf x}) = \left({\bf y}_1,\ldots,{\bf y}_t,{\bf p}_0\cdot {\bf y}_1,\ldots,{\bf p}_0\cdot {\bf y}_t,\left({\bf p}_1\cdot\sum{\bf y}_i\right),\left({\bf p}_2\cdot\sum{\bf y}_i\right)\right),
\end{equation}
where the summation is over all $i\in [t].$ It is not hard to see that the first $k+t$ coordinates of $\mc{C}$ have locality $r.$ We argue that the last two coordinates have locality $k-t+1.$ From~(\ref{Eqn:MDSRecover}) we have
$$
\left({\bf p}_1\cdot\sum{\bf y}_i\right) = \sum\limits_{j=1}^{r-1}\alpha_j \left({\bf e}_j\cdot\sum{\bf y}_i\right) + \alpha_r\left({\bf p}_2\cdot\sum{\bf y}_i\right),
$$
where the summation is over all $i\in [t].$ Equivalently,
$$
\left({\bf p}_1\cdot\sum{\bf y}_i\right) = \sum\limits_{j=1}^{r-1}\sum_{i\in [t]}\alpha_j {\bf y}_i(j) + \alpha_r\left({\bf p}_2\cdot\sum{\bf y}_i\right).
$$
Thus the next-to-last coordinate of $\mc{C}$ can be recovered from accessing $(r-1)t$ information coordinates and the last coordinate. Similarly, the last coordinate can be recovered from $k-t$ information coordinates and the next-to-last coordinate. To prove that the code $\mc{C}$ has distance $4$ we give an algorithm to correct $3$ erasures in $\mc{C}$. The algorithm has two steps.

{\bf Step 1:} For every $i\in [t],$ we refer to a subset $({\bf y}_i,{\bf p}_0\cdot{\bf y}_i)$ of $r+1$ coordinates of $\mc{C}$ as a block. We go over all $t$ blocks. If we encounter a block where one symbol is erased, we recover this symbol from other symbols in the block.

{\bf Step 2:} Observe that after the execution of Step 1 there can be at most one block that has erasures. If no such block exists; then on Step 1 we have successfully recovered all information symbols and thus we are done. Otherwise, let the unique block with erasures be $({\bf y}_j,{\bf p}_0\cdot{\bf y}_j)$ for some $j\in [t].$ Since we know all vectors $\{{\bf y}_i\}_{i \ne j,\ i\in [t]},$ from $\left({\bf p}_1\cdot\sum_{i\in [t]}{\bf y}_i\right)$ and $\left({\bf p}_2\cdot\sum_{i\in [t]}{\bf y}_i\right)$ (if these symbols are not erased) we recover symbols ${\bf p}_1\cdot{\bf y}_j$ and ${\bf p}_2\cdot{\bf y}_j.$ Finally, we invoke the decoding procedure for the code $\mc{E}$ to recover ${\bf y}_j$ form at most $3$ erasures in $\mc{E}({\bf y}_j)=({\bf y}_j,{\bf p}_0\cdot{\bf y}_j,{\bf p}_1\cdot{\bf y}_j,{\bf p}_2\cdot{\bf y}_j).$
\end{proof}

\section{Non-Canonical Codes}\label{Sec:NonCanonicalCodes}

In this section we observe that canonical codes detailed in sections~\ref{Sec:CanonicalCodes} and~\ref{Sec:ParityLocality} are not the only family of optimal $(r,d)$-codes. If one relaxes conditions of theorem~\ref{Theorem:Canonical} one can get other families. One such family that yields uniform locality for all symbols is given below. The (non-explicit) proof resembles the proof of theorem~\ref{Theorem:ParityLocalityUpperGeneral} albeit is much simpler.
\begin{theorem}
\label{Theorem:NonCanonical}
Let $n,k,r,$ and $d\geq 2$ be positive integers. Let $q>kn^k$ be a prime power. Suppose $(r+1)\mid n$ and
$$n-k = \left\lceil \frac{k}{r} \right\rceil +  d-2;$$
then there exists an $[n,k,d]_q$ code where all symbols have locality $r.$
\end{theorem}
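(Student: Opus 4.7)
The plan mirrors the construction used in the proof of \thmref{Theorem:ParityLocalityUpperGeneral}, but with a markedly simpler choice of the subspace $L\subseteq \mathbb{F}_q^n$. Since $(r+1)\mid n$, partition $[n]$ into $m:=n/(r+1)$ disjoint blocks $P_1,\ldots,P_m$, each of size exactly $r+1$. For each $i$, fix a vector ${\bf v}_i\in\mathbb{F}_q^n$ with $\Supp({\bf v}_i)=P_i$ and every entry inside $P_i$ non-zero. Set $L=\Sp\{{\bf v}_1,\ldots,{\bf v}_m\}$; the disjointness of the supports gives $\dim L=m$ and ensures that every non-zero vector of $L$ has support $\bigsqcup_{i\in T}P_i$ for a unique non-empty $T\subseteq[m]$. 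Using the hypothesis $q>kn^k$, invoke \lemmaref{Lemma:GenPosExist} to obtain $\{{\bf c}_1,\ldots,{\bf c}_n\}\subset \mathbb{F}_q^k$ in general position subject to $L$, and define $\mc{C}$ via~(\ref{Eqn:CodeDefinition}).

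Locality of every symbol is immediate: condition 1 of \namedref{Definition}{Def:GenPos} applied to each ${\bf v}_i$ yields $\sum_{j\in P_i}{\bf v}_i(j)\,{\bf c}_j=0$ with all coefficients non-zero, so each coordinate in $P_i$ is a linear combination of the remaining $r$ coordinates in $P_i$; the $P_i$'s cover $[n]$, so every symbol has locality $r$. From the description of supports in $L$, a $k$-subset $K\subseteq[n]$ is a $k$-core iff no block $P_i$ is entirely contained in $K$. Combining this with condition 2 of \namedref{Definition}{Def:GenPos} and \factref{Fact:Dist}, the two remaining claims---that $\mc{C}$ has dimension $k$ (some $k$-core exists, a simple count given $d\geq 2$) and distance at least $d$ (every $S\subseteq[n]$ with $|S|=n-d+1$ contains a $k$-core)---both reduce to a single counting statement about $k$-cores of $L$.

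Fix $S$ with $|S|=n-d+1$ and let $b$ be the number of blocks entirely inside $S$. A $k$-core $K\subseteq S$ is constructed by dropping exactly $|S|-k=\lceil k/r\rceil-1$ coordinates from $S$, at least one from each of the $b$ saturated blocks; this is feasible iff $b\leq \lceil k/r\rceil-1$. Since the $d-1$ elements of $[n]\setminus S$ meet at least $\lceil (d-1)/(r+1)\rceil$ distinct blocks, $b\leq n/(r+1)-\lceil (d-1)/(r+1)\rceil$, and it suffices to verify
$$\frac{n}{r+1}-\Bigl\lceil \frac{d-1}{r+1}\Bigr\rceil \;\leq\; \Bigl\lceil \frac{k}{r}\Bigr\rceil-1.$$
This arithmetic check is the one substantive step. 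Writing $k=q_0r+a$ with $0\leq a<r$ and $d-1=p_0(r+1)+e$ with $0\leq e\leq r$, the hypotheses $n=k+\lceil k/r\rceil+d-2$ and $(r+1)\mid n$ together pin the pair $(a,e)$ to exactly two possibilities: either $(a,e)=(0,1)$, or $a\geq 1$ and $a+e=r+1$. In both cases direct substitution turns the displayed inequality into an equality, so the construction meets the distance bound on the nose. The main conceptual obstacle---and the only non-routine part of the argument---is recognising that the two hypotheses on $n$ conspire to leave precisely these two residue configurations; this is exactly what makes $(r+1)\mid n$ the right divisibility hypothesis for the statement.
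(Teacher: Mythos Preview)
Your construction is identical to the paper's: the same partition of $[n]$ into $m=n/(r+1)$ blocks, the same space $L$, the same appeal to \lemmaref{Lemma:GenPosExist}, and the same characterization of $k$-cores. The locality and dimension arguments match as well. Your proof is correct.

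The only point of departure is the distance verification, where the paper's argument is considerably shorter and avoids your residue case analysis entirely. Instead of bounding the number $b$ of saturated blocks via $\lceil (d-1)/(r+1)\rceil$ and then checking the arithmetic identity in the two residue cases, the paper takes an arbitrary $S$ containing no $k$-core and observes two things directly: dropping one element from each of the $b$ saturated blocks leaves an $(|S|-b)$-core, so $|S|-b\leq k-1$; and the $b$ saturated blocks themselves, after dropping one element each, already furnish a $br$-core inside $S$, so $br\leq k-1$. Combining gives $|S|\leq k-1+\lfloor (k-1)/r\rfloor = k+\lceil k/r\rceil-2=n-d$, with no need to unpack the hypothesis $(r+1)\mid n$ into residue classes. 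Your route reaches the same destination but works harder to get there; the paper's two-inequality trick is worth noting as the cleaner finish.
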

\begin{proof}
Let $t=\frac{n}{r+1}.$ We partition the set $[n]$ into $t$ subsets $P_1,\ldots,P_t$ each of size $r+1.$ For every $i\in [t]$ we fix a vector ${\bf v}_i\in\mathbb{F}_q^n,$ such that $\Supp({\bf v}_i)=P_i.$ We set all non-zero coordinates in vectors $\{{\bf v}_i\}_{i\in [t]}$ to be equal to $1.$
We consider the linear space $L=\Span\left(\{{\bf v}_i\}_{i\in [t]}\right).$ For every any ${\bf v}\in L$ we have
$$\Supp({\bf v})=\bigsqcup_{i\in T} P_i\quad \mbox{for some for some $T\subseteq [t].$}$$
Observe that a set $K\subseteq [n],$ $|K|=k$ is a $k$-core for $L$ if and only if for all $i\in [t],\ P_i\not\subseteq K.$ Also observe that conditions of the theorem imply $k\leq n-t.$ Therefore $k$-cores for $L$ exist. We use lemma~\ref{Lemma:GenPosExist} to obtain vectors $\{{\bf c}_i\}_{i\in [n]}\in \mathbb{F}_q^k$ that are in general position subject to the space $L.$ We consider the code $\mathcal{C}$ defined as in~(\ref{Eqn:CodeDefinition}). In it not hard to see that $\mathcal{C}$ has dimension $k$ and locality $r$ for all symbols. It remains to prove that the code $\mc{C}$ has distance
\begin{equation}
\label{Eqn:DesiredDistTwo}
d=n-k- \left\lceil \frac{k}{r} \right\rceil +2.
\end{equation}
Our proof relies on Fact~\ref{Fact:Dist}. Let $S\subseteq [n]$ be an arbitrary subset such that $\Rk\{{\bf c}_i\}_{i\in S}<k.$ Clearly, no $k$-core of $L$ is in $S.$ Let
$$b=\#\{i\in [t]\mid P_i\subseteq S\}.$$
We have $|S|-b\leq k-1$ since dropping $b$ elements from $S$ (one from each $P_i\subseteq S$) turns $|S|$ into an $(|S|-b)$-core. We also have $br\leq k-1$ since dropping one element from each $P_i\subseteq S$ gives us a $br$-core in $S.$ Combining the last two inequalities we conclude that
$$|S|\leq k+\left\lfloor \frac{k-1}{r} \right\rfloor-1.$$
Combining this inequality with the identity $\left\lfloor \frac{k-1}{r} \right\rfloor = \left\lceil \frac{k}{r} \right\rceil-1$ and Fact~\ref{Fact:Dist} we obtain~(\ref{Eqn:DesiredDistTwo}).
\end{proof}

\section{Beyond Worst-Case Distance}\label{Sec:BeyondDist}

\newcommand{\support}{{\text support }}
\newcommand{\bb}{{\ensuremath \mb{b} }}
\newcommand{\cc}{{\ensuremath \mb{c} }}
\newcommand{\ee}{{\ensuremath \mb{e} }}

In this section, codes are assumed to be systematic unless
otherwise stated. They will have $k$ information symbols and $h = n-k$
parity check symbols.

\subsection{Generalized Pyramid Codes}

The supports of the parity check symbols in a code can be described using a bipartite graph. More generally, we define the notion of a set of points with supports matching a graph $G.$

\begin{definition}
Let $G([k],[h],E)$ be a bipartite graph. We say that
$\mb{c}_1,\ldots,\mb{c}_h \in \F_q^k$ have supports matching $G$ if
$\Supp(\cc_j) = \Gamma(j)$ for all $j \in [h]$ where $\Gamma(j)$ denotes the neighborhood of
$j$ in $G$.
\end{definition}

Given points $\mb{c}_1,\ldots,\mb{c}_h$, consider the $k \times h$ matrix $C$ with columns $\mb{c}_1,\ldots,\mb{c}_h$. For $I \subseteq [k]$ and $J \subseteq [h]$, let $C_{I,J}$ denote the sub-matrix of $C$ with rows indexed by $I$ and columns indexed by $J$.

\begin{definition}
Points $\mb{c}_1,\ldots,\mb{c}_h \in \F_q^k$ with supports matching $G$ are in general position if for every $I \subseteq [h]$ and $J \subseteq [k]$ such that there is a perfect matching from $I$ to $J$ in $G$, the sub-matrix $C_{I,J}$ is invertible.
\end{definition}

Standard arguments show that over sufficiently large fields $\F_q$, choosing
$\mb{c}_1,\ldots,\mb{c}_h$ randomly from the set of vectors with
support matching $G$ gives points in general position.

Coming back to codes, the supports of the parity checks define a bipartite graph which we will call the
\support graph. This is closely related to but distinct from the
Tanner graph.

\begin{definition}
Let $\mc{C}$ be a systematic code with point set $C =
\{\mb{e}_1,\ldots,\mb{e}_k,\mb{c}_1,\ldots, \mb{c}_h\}$. The support
graph $G([k],[h],E)$ of $\mc{C}$ is a bipartite graph  where $(i,j) \in E$ if $\mb{e}_i \in \Supp(\mb{c}_j)$.
\end{definition}

For instance in any canonical $(r,d)$-code, the support graph is specified up to relabeling. There are $\frac{k}{r}$ vertices in $V$ of degree $r$ corresponding to $C' \subseteq C$, whose neighborhood partitions the set $U$ and $d-2$ vertices of degree $k$ corresponding to $C'' \subseteq C$. The minimum distance of such a code is exactly $d$, and hence there are some patterns of $d$ erasures that the code cannot correct. However it is possible that the code could correct many patterns of erasures of weight $d$ and higher, for a suitable choice of $\mb{c}_i$s. In general one could ask: among all codes with a support graph $G,$ which codes can correct the most erasure patterns? {\em A Priori}, it is unclear that there should be a single code that is optimal in the sense that it corrects the maximal possible set of patterns. As shown by~\cite{HCL} such codes do exist over sufficiently large fields.

Consider a systematic code $\mc{C}$ with support graph $G([k],[h],E)$. Given $I \subseteq [k]$ and $J
\subseteq [h]$, let $\Gamma_J(I)$ denote $\Gamma(I)\cap J$ (define
$\Gamma_I(J)$ similarly). Consider a set of erasures $S \cup T$ where $S \subseteq [k]$
and $T\subseteq [h]$ are the sets of information and parity check symbols respectively that are erased.
To correct these erasures, we need to recover the symbols
corresponding to $\{{\bf e}_i:i \in S\}$ from the parity checks
corresponding to $\{{\bf c}_j: j \in \bar{T} = [h] \setminus T\}$.
For this to be possible, a necessary condition
is that for every $S^\prime \subseteq S,
|\Gamma_{\bar{T}}(S^\prime)| \geq |S^\prime|$. By Hall's theorem, this
is equivalent to the existance of a matching in $G$ from $S$ to
$\bar{T}$. We say that such a set of erasures satisfies Hall's condition.

\begin{definition}
A systematic code $\mc{C}$ with support graph $G$ is a generalized
pyramid code if every set of erasures satisfying Hall's condition
can be corrected.
\end{definition}

We can rephrase this definition in algebraic terms using the notion of
points with specified supports in general position.

\begin{theorem}
\cite{HCL}
Let $\mc{C}$ be a systematic code with support
graph $G$. $\mc{C}$ is a generalized pyramid code iff
$\mb{c}_1,\ldots,\mb{c}_h$ are in general position with supports
matching $G$.
\end{theorem}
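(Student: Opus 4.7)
The plan is to reduce correction of an erasure pattern to a rank condition on a submatrix of the coefficient matrix $C$, and then use Hall's marriage theorem as the bridge between the combinatorial hypothesis (Hall's condition on the support graph) and the algebraic hypothesis (general position, i.e., invertibility of square submatrices indexed by matchings).

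First I would establish the algebraic reformulation of erasure correction. Fix an erasure pattern $S\cup T$ with $S\subseteq [k]$ and $T\subseteq [h]$. Since $\mc{C}$ is systematic, the values $\{x_i : i\notin S\}$ are known, and each surviving parity $\mb{c}_j\cdot \mb{x}$ for $j\in \bar T$ yields, after subtracting the contribution of the known coordinates, a linear equation in the unknowns $\{x_i : i\in S\}$ whose coefficient row is $(\mb{c}_j)_S$. The erasure pattern is therefore correctable if and only if the $|S|\times |\bar T|$ submatrix $C_{S,\bar T}$ has rank $|S|$ (i.e.\ full row rank), which is the pivot I will use in both directions.

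For the direction \emph{general position $\Rightarrow$ generalized pyramid}, I take any erasure pattern $S\cup T$ satisfying Hall's condition and apply Hall's marriage theorem inside the bipartite graph $G$ restricted to $S\cup \bar T$ to obtain a perfect matching from $S$ to some $T'\subseteq \bar T$ with $|T'|=|S|$. By general position the square matrix $C_{S,T'}$ is invertible, so $C_{S,\bar T}$, which contains these $|S|$ columns, has rank $|S|$; the first step then gives correctability. For the converse, given $I\subseteq [h]$ and $J\subseteq [k]$ admitting a perfect matching in $G$, I set $S=J$ and $T=[h]\setminus I$, so that $\bar T=I$; any $S'\subseteq S$ inherits a matching into $I$, so Hall's condition holds, the generalized pyramid property guarantees correctability, and the first step forces $C_{J,I}$ to have rank $|J|=|I|$, i.e.\ to be invertible.

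No step is a serious obstacle; the main thing to get right is the bookkeeping, matching the two sides of Hall's theorem to the correct role (rows vs.\ columns) of the submatrix of $C$, and remembering that the perfect matching in the definition of general position runs from $[h]$-vertices to $[k]$-vertices while Hall's condition for erasure correction is phrased as a matching from $S\subseteq[k]$ into $\bar T\subseteq[h]$; once the algebraic reformulation is in place, both implications fall out immediately.
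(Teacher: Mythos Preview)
The paper does not supply a proof of this theorem; it is quoted as a result of~\cite{HCL} and no argument is given in the body or appendix. So there is nothing in the paper to compare your proposal against.

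That said, your argument is correct and is the natural proof one would expect. The key reduction---that the erasure pattern $S\cup T$ is correctable if and only if the submatrix $C_{S,\bar T}$ has full row rank---is exactly right for a systematic code, and from there Hall's theorem does all the work in both directions. Your bookkeeping remark is also apt: the paper's Definition of general position writes ``$I\subseteq[h]$, $J\subseteq[k]$'' and then refers to $C_{I,J}$, which clashes with its own convention that rows of $C$ are indexed by $[k]$ and columns by $[h]$; the intended object is the square submatrix on the matched rows and columns, and you handle the swap correctly. Nothing further is needed.
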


\subsection{The Tradeoff between Locality and Erasure Correction}

For any parity check symbol $\cc_j$, it is clear that $\loc(\cc_j) \leq
\wt(\cc_j) = \deg(j)$. We will show that no better locality
is possible for a generalized pyramid code. This result relies on
a characterization of the support of the vectors in the space $\mc{V}$ spanned by
$\{\cc_1,\ldots,\cc_h\}$ in terms of the graph $G$.

Let $\mc{V}$ denote the space spanned by $\{\cc_1,\ldots,\cc_h\}$ which are in general position with supports matching $G$. Let $\Supp(\mc{V}) \subseteq 2^{[k]}$ denote the set of supports of vectors in $\mc{V}$. We give a necessary condition for membership in $\Supp(\mc{V})$. Our condition is in terms of sets of coordinates that can be eliminated by combination of certain $\cc_j$s.

\begin{definition}
Let $\cc = \sum_{j \in J}\mu_j\cc_j$ where $\mu_j \neq 0$. Let $I = \cup_{j \in J}\Gamma(j) \setminus \Supp(\cc)$. We say that the set $I$ has been eliminated from $\cup_{j \in J} \Gamma(j).$
\end{definition}

\begin{theorem}
\label{thm:support-nec}
Let $\{\cc_1,\ldots,\cc_h\}$ be vectors with supports matching $G$ in general position. The set $I$ can be eliminated from $\cup_{j \in J}\Gamma(j)$ only if $|\Gamma_J(I^\prime)| > |I^\prime|$ for every $I^\prime \subseteq I.$
\end{theorem}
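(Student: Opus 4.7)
The plan is to argue the contrapositive: if some non-empty $I' \subseteq I$ satisfies $|\Gamma_J(I')| \leq |I'|$, then no elimination $\cc = \sum_{j\in J}\mu_j\cc_j$ with all $\mu_j \neq 0$ can have produced exactly this $I$. Pick such an $I'$ that is minimal under inclusion, and set $J' = \Gamma_J(I') \subseteq J$. Evaluating the relation $\cc = \sum_{j \in J}\mu_j\cc_j$ at each coordinate $i \in I'$ yields, using that $\cc(i) = 0$ for $i \in I$ and that $\cc_j(i) = 0$ whenever $j \in J \setminus J'$ (such a $j$ has no neighbor in $I'$),
\[
\sum_{j \in J'} \mu_j\,\cc_j(i) = 0 \qquad \text{for every } i \in I'.
\]
This is a homogeneous linear system in the unknowns $(\mu_j)_{j \in J'}$ with coefficient matrix $C_{I', J'}$ of shape $|I'| \times |J'|$, and by assumption $|J'| \leq |I'|$.

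The key step is to show that the bipartite graph $G$ restricted to $I' \cup J'$ contains a matching saturating $J'$. Once this is in hand, there is $I'' \subseteq I'$ with $|I''| = |J'|$ and a perfect matching from $J'$ to $I''$ in $G$; the general position hypothesis then forces $C_{I'', J'}$ to be invertible, and restricting the system above to the rows indexed by $I''$ yields $\mu_j = 0$ for every $j \in J'$. Since $J' \subseteq J$ and every $\mu_j$ with $j \in J$ is non-zero, this is the desired contradiction.

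The main obstacle is establishing the saturation of $J'$, and this is precisely where the minimality of $I'$ is used via a defect version of Hall's theorem. Suppose towards contradiction that some $J'' \subseteq J'$ had $|\Gamma_{I'}(J'')| < |J''|$; set $\widetilde I = \Gamma_{I'}(J'')$ and $I''' = I' \setminus \widetilde I$. Every $i \in I'''$ has no neighbor in $J''$ (otherwise $i$ would belong to $\widetilde I$), so $\Gamma_J(I''') \subseteq J' \setminus J''$ and hence $|\Gamma_J(I''')| \leq |J'| - |J''|$. Combining $|\widetilde I| < |J''|$ with $|J'| \leq |I'|$ gives
\[
|I'''| \;=\; |I'| - |\widetilde I| \;>\; |I'| - |J''| \;\geq\; |J'| - |J''| \;\geq\; |\Gamma_J(I''')|.
\]
Note that $J''$ must be non-empty (otherwise the assumed inequality reads $0<0$), so each $j \in J''$ contributes a neighbor in $I'$, making $\widetilde I$ non-empty; thus $I''' \subsetneq I'$ is a strict, non-empty subset with $|\Gamma_J(I''')| < |I'''|$, contradicting the minimality of $I'$. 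This establishes the saturation claim and completes the proof plan.
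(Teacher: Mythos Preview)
Your proof is correct and follows the same overall strategy as the paper's: argue by contradiction, pass to a minimal offending subset $I'$, invoke Hall's theorem to obtain a perfect matching between $I''$ and $J'$ inside $G$, and then use the general-position hypothesis to force $\mu_j = 0$ for $j \in J'$, contradicting $\mu_j \neq 0$.

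The one genuine difference is in how the matching is extracted. The paper first refines the bad set to one with \emph{equality}, $|\Gamma_J(I')| = |I'|$, via an intermediate-value argument along a chain of subsets (using along the way that every $i \in I$ has $|\Gamma_J(\{i\})| \geq 2$); it then applies Hall's theorem from the $I'$ side, where the strict inequality on proper subsets is exactly Hall's condition. You instead take an inclusion-minimal $I'$ with $|\Gamma_J(I')| \leq |I'|$ and apply Hall's theorem from the $J'$ side, deriving the Hall condition for subsets $J'' \subseteq J'$ directly from the minimality of $I'$ via your complement argument. Your route is slightly more economical in that it bypasses the chain argument and the observation $|\Gamma_J(\{i\})|\geq 2$; the paper's route has the advantage of making the equality $|I'| = |J'|$ explicit up front, so the matching is immediately a perfect matching between the two sets rather than one saturating $J'$ that is then restricted. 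Either way the contradiction is the same.
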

\begin{proof}
Let $\cc = \sum_{j \in J}\mu_j\cc_j$ where $\mu_j \neq 0$. Let $I = \cup_{j \in J}\Gamma(j) \setminus \Supp({\bf c}).$ Assume for contradiction
that there exists $\tilde I \subseteq I$ where $\Gamma_J(\tilde I) \leq |\tilde I|.$ We will show that there exists $I^\prime \subseteq \tilde I$ so that $\Gamma_J(I^\prime) = |I^\prime|$ and that $\Gamma_J(I^{\prime\prime}) > |I^{\prime\prime}|$ for every non-empty subset $I^{\prime\prime} \subsetneq I^\prime.$

It suffices to prove the existence of  $I^\prime \subseteq \tilde I$ where $|\Gamma_J(I^\prime)| = |I^\prime|;$ the claim about subsets of $I^\prime$ will then follow by taking a minimal such $I^\prime.$ Observe that every $i \in I$ must have $|\Gamma_J(i)| \geq 2$, since if $i$ occurs in exactly one $S_j$, then it cannot be eliminated. Hence we must have $|\tilde I| \geq 2.$

One can construct the set $I$ starting from  a singleton and adding elements one at a
time, giving a sequence $I_1,\ldots, I_\ell =I$. We claim
that for any $l \leq \ell$,
$$|\Gamma_J(I_l)| - |I_l| \geq |\Gamma_J(I_{l-1})| - |I_{l-1}| -1.$$
This holds since $\Gamma_J(I_l)$ can only  increase on adding $i$
to $I_{l-1}$ while $|I_l|$ increases by $1$. Since $|\Gamma_J(I_1)| -
|I_1| \geq 1$ whereas $|\Gamma(I_\ell)| - |I_\ell| \leq 0$, we must
have $|\Gamma_J(I_l)| - |I_l| =0$ for some $l \leq \ell$. Thus we have a
set where $|\Gamma_J(I_l) = |I_l|$ as desired.

Since the set $I^\prime$ satisfies Hall's matching condition, there is a perfect matching from $I^\prime$ to $J^\prime = \Gamma_J(I^\prime)$ in $G$. But this means that the sub-matrix $C_{I^\prime,J^\prime}$ has full rank. On the other hand, $\cc = \sum_j\mu_j\cc_j$ and $I^\prime \subseteq I = \cup_j \Gamma(j) \setminus \Supp(\cc)$. Let $\pi(\cc)$  denote the restriction of $\cc$ onto the co-ordinates in $I^\prime$. Then we have
$$\sum_{j \in J} \mu_j\pi(\cc_j) = \sum_{j \in J^\prime}\mu_j\pi(\cc_j) = \pi(\cc) =0.$$
The first equality holds because $\pi(\cc_j) =0$ for $j \not\in \Gamma(I^\prime)$, the second by linearity of $\pi$ and the last since $\pi(\cc) =0$. Hence the vector $\mu_J^\prime= \{\mu_j\}_{j \in J^\prime}$ lies in the kernel of $C_{I^\prime,J^\prime}$ which contradicts the assumption that it has full rank.

This shows that the condition
$|\Gamma_J(I^\prime)| > |I^\prime|$ for all $I^\prime \subseteq I$ is necessary.
\end{proof}

\begin{corollary}
\label{cor:lb}
If the set $I$ can be eliminated from $\cup_{j \in J} \Gamma(j),$ then $|I| \leq |J| -1.$
\end{corollary}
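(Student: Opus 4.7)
The plan is to derive the corollary as an essentially immediate consequence of \thmref{thm:support-nec}, which was proved just above it. The content of that theorem is that if $I$ can be eliminated from $\bigcup_{j\in J}\Gamma(j)$, then Hall's condition holds \emph{strictly} for every subset $I'\subseteq I$, namely $|\Gamma_J(I')|>|I'|$. So the task reduces to extracting the desired numerical bound from this strict expansion property.

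The key step is simply to instantiate \thmref{thm:support-nec} with $I'=I$ itself. This gives
\[
|\Gamma_J(I)|\ >\ |I|.
\]
Since $\Gamma_J(I)=\Gamma(I)\cap J\subseteq J$, we have $|J|\ge |\Gamma_J(I)|$, and chaining the two inequalities yields $|J|>|I|$, i.e.\ $|I|\le |J|-1$, which is exactly the claim. No further case analysis or additional lemmas are needed.

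There is essentially no obstacle here; the only thing to be careful about is the edge case $I=\emptyset$, where the statement $|I|\le |J|-1$ is vacuous provided $J\ne\emptyset$ (and the elimination definition implicitly assumes at least one $\mu_j\ne 0$, so $J$ is nonempty). Thus the corollary is a one-line deduction from the strict Hall expansion established in \thmref{thm:support-nec}.
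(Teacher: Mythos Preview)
Your proposal is correct and matches the paper's intended argument: the corollary is stated without proof immediately after \thmref{thm:support-nec}, precisely because instantiating $I'=I$ and using $\Gamma_J(I)\subseteq J$ gives $|J|\ge |\Gamma_J(I)|>|I|$ in one line. Your handling of the $I=\emptyset$ edge case via $J\neq\emptyset$ is also appropriate.
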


If the field size $q$ is sufficiently large, the necessary condition given by theorem~\ref{thm:support-nec} is also sufficient. We defer the proof of this statement to  Appendix~\ref{Appendix} and  prove our lower bound on the locality of generalized pyramid codes.

\begin{theorem}
\label{thm:gpc}
In a generalized pyramid code, $\loc(\cc_j) = \deg(j)$ for all $j \in [h]$.
\end{theorem}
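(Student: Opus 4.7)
The upper bound $\loc(\cc_j) \leq \deg(j)$ is immediate, since $\cc_j$ is already a linear combination of the $\deg(j)$ standard basis vectors $\ee_i$ indexed by $i \in \Gamma(j) = \Supp(\cc_j)$. The main work is the matching lower bound, and the plan is to derive it directly from Corollary~\ref{cor:lb} applied to the support graph $G$.

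For the lower bound, let $\ell = \loc(\cc_j)$ and fix a minimal witness expressing
\[
\cc_j = \sum_{i \in A} \lambda_i \ee_i + \sum_{j' \in J} \mu_{j'}\, \cc_{j'},
\]
where $A \subseteq [k]$, $J \subseteq [h]\setminus\{j\}$, every $\lambda_i$ and $\mu_{j'}$ is nonzero, and $|A|+|J| = \ell$. The key idea is to move the parity terms to the left and read the resulting identity as an elimination of coordinates by the parity checks indexed by $J^* := J \cup \{j\}$.

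Concretely, set $\cc := \cc_j - \sum_{j' \in J} \mu_{j'}\cc_{j'}$; this equals $\sum_{i \in A}\lambda_i \ee_i$, so $\Supp(\cc) \subseteq A$. Writing $\cc = \sum_{j' \in J^*} \mu'_{j'}\, \cc_{j'}$ with $\mu'_j = 1$ and $\mu'_{j'} = -\mu_{j'}$ for $j' \in J$ (all nonzero, by minimality of the witness), the eliminated set
\[
I = \bigcup_{j' \in J^*} \Gamma(j') \setminus \Supp(\cc)
\]
contains $\bigcup_{j' \in J^*} \Gamma(j') \setminus A$, and in particular contains $\Gamma(j) \setminus A$. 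Corollary~\ref{cor:lb} then gives $|I| \leq |J^*|-1 = |J|$, so $|\Gamma(j) \setminus A| \leq |J|$. Combining with the trivial bound $|\Gamma(j)\cap A|\leq |A|$ yields $\deg(j) \leq |A|+|J| = \ell$, as desired.

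The only step requiring care, and the thing I would watch out for, is verifying that the coefficients $\mu'_{j'}$ in the combination defining $\cc$ are all nonzero, so that Corollary~\ref{cor:lb} genuinely applies to the index set $J^*$; this is why I begin with a \emph{minimal} locality witness, which forces $\mu_{j'} \neq 0$ for every $j' \in J$, while $\mu'_j = 1$ comes for free. Everything else is a routine translation between the linear-algebraic statement about $\cc_j$ and the combinatorial statement about its support graph $G$.
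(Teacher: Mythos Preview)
Your proof is correct and follows essentially the same approach as the paper: rewrite a locality witness for $\cc_j$ as a combination over $J^* = J \cup \{j\}$ with support contained in $A$, observe that this eliminates at least $|\Gamma(j)\setminus A|$ coordinates, and invoke Corollary~\ref{cor:lb} to bound $|\Gamma(j)\setminus A| \le |J|$, giving $\deg(j) \le |A|+|J|$. The paper phrases this by contradiction (assuming $|A|+|B| \le \deg(t)-1$) and is slightly less explicit than you are about why all coefficients in the combination may be taken nonzero, but the substance is identical.
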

\begin{proof}
Assume for contradiction that $\loc(\cc_t)  \leq \deg(t) -1$ for some $t \in [h]$. Hence there exist $A \subseteq [k]$ and $B \subseteq [h]$ (not containing $t$) such that
\begin{align*}
\cc_t = \sum_{i \in A}\lambda_i\ee_i  + \sum_{j \in B}\mu_j\cc_j.
\end{align*}
We have $|A| =a$, $|B| = b$ and $a + b \leq \deg(t) -1$.
Hence
\begin{align*}
\cc_t - \sum_{j \in B}\mu_j\cc_j = \sum_{i \in A}\lambda_i\ee_i.
\end{align*}
Thus we have eliminated at least $\deg(t) -a \geq b+1$ indices from $\cup_{j \in B \cup \{t\}}\Supp(\cc_j)$ using a linear combination of $b+1$ vectors. By corollary~\ref{cor:lb} this is not possible for vectors in general position.
\end{proof}

\bibliographystyle{plain}
\bibliography{LDC_PIR_biblio}

\appendix

\section{Spaces spanned by  general position vectors}\label{Appendix}

\begin{lemma}
Let $q \geq n$ be a prime power. The set of supports of vectors in any linear space $\mc{V}\subseteq \mathbb{F}_q^n$ is closed under union.
\end{lemma}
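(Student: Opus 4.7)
The plan is to prove the statement by explicitly exhibiting, for any $\mathbf{u}, \mathbf{v} \in \mc{V}$, a vector $\mathbf{w} \in \mc{V}$ of the form $\mathbf{w} = \mathbf{u} + \alpha \mathbf{v}$ whose support equals $\Supp(\mathbf{u}) \cup \Supp(\mathbf{v})$. Since $\mc{V}$ is linear, any such $\mathbf{w}$ automatically lies in $\mc{V}$, so the only issue is to choose $\alpha \in \F_q$ so that no cancellation occurs at a coordinate in $\Supp(\mathbf{u}) \cup \Supp(\mathbf{v})$.

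First I would handle the easy cases: if $\Supp(\mathbf{v}) \subseteq \Supp(\mathbf{u})$ take $\mathbf{w} = \mathbf{u}$ (i.e.\ $\alpha = 0$), and symmetrically if $\Supp(\mathbf{u}) \subseteq \Supp(\mathbf{v})$ take $\mathbf{w} = \mathbf{v}$. Then I would analyze $\mathbf{w} = \mathbf{u} + \alpha \mathbf{v}$ coordinate by coordinate. Outside $\Supp(\mathbf{u}) \cup \Supp(\mathbf{v})$ the coordinate is zero; on $\Supp(\mathbf{u}) \setminus \Supp(\mathbf{v})$ it equals $\mathbf{u}(i) \neq 0$ regardless of $\alpha$; on $\Supp(\mathbf{v}) \setminus \Supp(\mathbf{u})$ it equals $\alpha \mathbf{v}(i)$, which is nonzero iff $\alpha \neq 0$; and on $\Supp(\mathbf{u}) \cap \Supp(\mathbf{v})$ it equals $\mathbf{u}(i) + \alpha \mathbf{v}(i)$, which is nonzero iff $\alpha \neq -\mathbf{u}(i)/\mathbf{v}(i)$. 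So the set of forbidden values of $\alpha$ is
\[
B \;=\; \{0\} \,\cup\, \bigl\{\, -\mathbf{u}(i)/\mathbf{v}(i) \;:\; i \in \Supp(\mathbf{u}) \cap \Supp(\mathbf{v}) \,\bigr\},
\]
which has cardinality at most $1 + |\Supp(\mathbf{u}) \cap \Supp(\mathbf{v})|$.

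The key counting step, which in my view is the main (though minor) subtlety, is to use the fact that we are in the remaining case where neither support is contained in the other. This guarantees that each of $\Supp(\mathbf{u})$ and $\Supp(\mathbf{v})$ contains at least one coordinate not in the other, so $|\Supp(\mathbf{u}) \cap \Supp(\mathbf{v})| \le n - 2$, giving $|B| \le n - 1 < n \le q$. Hence some $\alpha \in \F_q \setminus B$ exists, and for this $\alpha$ the vector $\mathbf{w} = \mathbf{u} + \alpha \mathbf{v}$ has the desired support. Finally, the closure of $\Supp(\mc{V})$ under arbitrary finite unions follows from the binary case by an easy induction on the number of vectors being combined.
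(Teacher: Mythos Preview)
Your proof is correct and follows essentially the same approach as the paper: handle the containment cases trivially, then for the remaining case consider $\mathbf{u} + \alpha \mathbf{v}$ and count the values of $\alpha$ that cause a cancellation on $\Supp(\mathbf{u}) \cap \Supp(\mathbf{v})$, using the bound $|\Supp(\mathbf{u}) \cap \Supp(\mathbf{v})| \le n-2$ when neither support contains the other. The paper's write-up is terser (it restricts to $\lambda \in \F_q^*$ from the start rather than putting $0$ into the forbidden set $B$), but the argument is the same.
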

\begin{proof}
Consider two vectors ${\bf a}$ and ${\bf b}$ in $\mathcal{V}$ with $\Supp({\bf a}) =S$ and $\Supp({\bf b})=T.$ We may assume that $|S|, |T| \leq n-1$ and that one set does not contain the other. Now consider ${\bf a} + \lambda {\bf b}$ for $\lambda \in F_q^*.$ It suffices to find $\lambda$ such that ${\bf a}(i) + \lambda {\bf b}(i) \neq 0$ for each $i \in S \cap T.$ This rules out at most $|S \cap T| \leq n-2$ values of $\lambda,$ so there is a solution provided $q -1 > n-2$ or $q \geq n.$
\end{proof}

It is easy to see that the condition $q \geq n$ is tight by considering the length $3$ parity check code over $\F_2$, where the set of supports is not closed under union.

\begin{theorem}
\label{thm:support-suff}
Let $q \geq n$. Let $\{\cc_1,\ldots,\cc_h\}$ be vectors with supports matching $G$ in general position which span a space $\mc{V}$. $\Supp(\mc{V})$ consists of all sets of the form $\cup_{j \in J}\Gamma(j) \setminus I$ where $I$ satisfies the condition $|\Gamma_J(I^\prime)| > |I^\prime|$ for every $I^\prime \subseteq I.$
\end{theorem}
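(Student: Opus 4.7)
The plan is to show, for any $J \subseteq [h]$ and any $I$ satisfying $|\Gamma_J(I')| > |I'|$ for every $I' \subseteq I$, that the set $S^\ast := \bigl(\bigcup_{j \in J}\Gamma(j)\bigr) \setminus I$ is realized as $\Supp(\cc)$ for some $\cc \in \mc{V}$. I will work inside the subspace $W := \{\cc \in \Span(\cc_j : j \in J) : \cc(i) = 0 \text{ for all } i \in I\}$; every $\cc \in W$ automatically has $\Supp(\cc) \subseteq S^\ast$. The strategy is to reduce to a coordinate-wise realizability claim: for each $i_0 \in S^\ast$, there is $\cc_{i_0} \in W$ with $\cc_{i_0}(i_0) \neq 0$. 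Once this is in hand, the union-closure lemma proved earlier in the appendix (applicable since $q \geq n$ bounds the ambient dimension) yields some $\cc \in W$ whose support equals $\bigcup_{i_0 \in S^\ast}\Supp(\cc_{i_0})$; this union lies in $S^\ast$ by construction and contains every $i_0 \in S^\ast$, hence is exactly $S^\ast$.

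To prove the coordinate-wise claim, I will look at the bipartite subgraph of $G$ induced on $(I \cup \{i_0\}) \cup J$ and verify the ordinary Hall condition. For $X \subseteq I$, the strict hypothesis gives $|\Gamma_J(X)| \geq |X| + 1$. For $X = X' \cup \{i_0\}$ with $\emptyset \neq X' \subseteq I$, monotonicity and the strict hypothesis on $X'$ give $|\Gamma_J(X)| \geq |\Gamma_J(X')| \geq |X'| + 1 = |X|$. For $X = \{i_0\}$, the fact that $i_0 \in \bigcup_{j \in J}\Gamma(j)$ gives $|\Gamma_J(\{i_0\})| \geq 1 = |X|$. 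Hall's marriage theorem then provides a perfect matching from $I \cup \{i_0\}$ to some $J_0 \subseteq J$ of size $|I|+1$. The general position hypothesis forces the $(|I|+1) \times (|I|+1)$ submatrix $C_{I \cup \{i_0\}, J_0}$ to be invertible, so the map $(\mu_j)_{j \in J_0} \mapsto \bigl(\sum_{j \in J_0}\mu_j \cc_j\bigr)\big|_{I \cup \{i_0\}}$ is a bijection onto $\F_q^{I \cup \{i_0\}}$. I pick coefficients whose image is the indicator of $i_0$ (zero on $I$, nonzero at $i_0$) and pad with $\mu_j = 0$ for $j \in J \setminus J_0$ to obtain $\cc_{i_0} \in W$.

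The main obstacle is this middle stage: converting the combinatorial Hall-type hypothesis on $I$ into the linear-algebraic existence, for each $i_0 \in S^\ast$, of a vector in $W$ nonvanishing at $i_0$. A small but necessary care arises in the Hall verification at $X = \{i_0\}$, where the hypothesis on $I$ is vacuous and must be replaced by the fact that $i_0$ is a neighbor of some $j \in J$. The reduction to coordinate-wise realizability, and the final gluing, both lean on the union-closure lemma and are essentially bookkeeping once that lemma is in hand.
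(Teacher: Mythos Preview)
Your argument is correct, and it follows the same overall template as the paper's proof---Hall's theorem plus the general-position invertibility to build a supply of vectors vanishing on $I$, then the union-closure lemma to glue them into one vector with support exactly $S^\ast$---but the parametrization is different. The paper first reduces to the case $J \subseteq \Gamma(I)$, then iterates over $j_0 \in J$: for each $j_0$ it applies Hall to $I$ against $J \setminus \{j_0\}$, obtains an $|I|\times|I|$ invertible block, and forms $\cc'_{j_0} = \cc_{j_0} - \sum_{j \in J''}\mu_j\cc_j$; it then has to invoke Corollary~\ref{cor:lb} (the necessity direction) to certify that $\Supp(\cc'_{j_0})$ is \emph{exactly} $\bigcup_{j \in J''\cup\{j_0\}}\Gamma(j)\setminus I$ before taking unions. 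You instead iterate over $i_0 \in S^\ast$, apply Hall to $I \cup \{i_0\}$ against $J$, and solve an $(|I|+1)\times(|I|+1)$ system to hit the indicator of $i_0$ on $I\cup\{i_0\}$. This buys you two simplifications: no preliminary reduction to $J \subseteq \Gamma(I)$, and no appeal to Corollary~\ref{cor:lb}, since you only need ``nonzero at $i_0$'' rather than an exact support computation at the intermediate stage. The trade-off is that your iteration runs over the (possibly larger) set $S^\ast$ rather than $J$, but this is irrelevant for correctness.
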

\begin{proof}
Theorem \ref{thm:support-nec} shows that the condition on $I$ is necessary, we now show that it is sufficient. For $j \not\in \Gamma(I)$ the sets $\Gamma(j)$ and $I$ are disjoint. Hence we can write
$$\cup_{j \in J}\Gamma(j) \setminus I = (\cup_{j \in J \cap \Gamma(I)}\Gamma(j)
\setminus I) \bigcup(\cup_{j \in J \cap \overline{\Gamma(I)}}\Gamma(j)).$$
By the closure under union, it suffices to prove the statement in the
case when $J \subseteq \Gamma(I)$. Fix $j_0 \in J$ and let $J^\prime = J\setminus\{j_0\}$.
Since $|\Gamma_J(I^\prime)| > |I^\prime|$ we have $|\Gamma_{J^\prime}(I^\prime)| \geq |I^\prime|$ for every
$I^\prime \subseteq I$. So there is a matching from $I$ to some subset $J^{\prime\prime}
\subseteq J^\prime$ where $|J^{\prime\prime}| = |I|$, and the matrix $C_{I,J^{\prime\prime}}$ is of
full rank since the $\cc_j$s are in general position.

Let $\pi(\cc)$ denote the restriction of a vector $\cc$ onto coordinates in $I$. Since $C_{I,J^{\prime\prime}}$ is invertible, the row vectors $\{\pi(\cc_j)\}_{j \in J^{\prime\prime}}$ have full rank. Note that $\pi(\cc_{j_0})$ is not a zero vector since $j_0 \in \Gamma(I).$ So there exist $\{\mu_j\}$ for $j \in J$ which are not all $0$ and
$$\pi(\cc_{j_0}) = \sum_{j \in J^{\prime\prime}}\mu_j\pi(\cc_j).$$

Now consider the vector $\cc^\prime_{j_0} = \cc_{j_0} - \sum_{j \in J^{\prime\prime}}\mu_j\cc_j$. Note that $\pi(\cc^\prime_{j_0})$ is a zero vector, which shows that $\Supp(\cc^\prime_{j_0}) \subseteq \cup_{j \in J^{\prime\prime} \cup \{j_0\}}\Gamma(j) \setminus I$. We will show that equality holds by using corollary~\ref{cor:lb}. Since we have eliminated $|I|$ vectors, the linear combination must involve at least $|I| +1$ vectors, which means that $\mu_j \neq 0$ for all $j$. Further the set of eliminated co-ordinates cannot be larger than $I$, since this would violate corollary~\ref{cor:lb}. Hence we have
\begin{equation}
\label{eq:supp}
\Supp(\cc^\prime_{j_0}) =  \cup_{j \in J^{\prime\prime} \cup \{j_0\}}\Gamma(j) \setminus I.
\end{equation}

By repeating this argument for every $j_0 \in J$, we will be able to
find $J(j_0) \subseteq J$ of size $|I|+1$ which contains $j_0$ and a
vector $\cc^\prime_{j_0}$ such that
$$\Supp(\cc^\prime_{j_0}) =  \cup_{j \in J(j_0)}\Gamma(j) \setminus I.$$
Using the closure under union of supports, we
conclude that $\Supp(\mc{V})$ contains the set
$$\cup_{j_0 \in J}\cup_{j \in J(j_0)}\Gamma(j) \setminus I = \cup_{j \in J}\Gamma(j) \setminus I.$$
\end{proof}

\end{document}